\documentclass[10pt,twocolumn]{IEEEtran}%

\usepackage{amsmath}
\usepackage{amsfonts}
\usepackage{amssymb}
\usepackage{graphicx}
\usepackage{psfrag}
\usepackage{cite}
\usepackage{algorithm}
\usepackage{color}
\usepackage{float}
\usepackage{epsfig,array,multicol,verbatim,algorithmic}%
\setcounter{MaxMatrixCols}{30}
%TCIDATA{OutputFilter=latex2.dll}
%TCIDATA{Version=5.00.0.2606}
%TCIDATA{LastRevised=Saturday, April 19, 2008 16:54:45}
%TCIDATA{<META NAME="GraphicsSave" CONTENT="32">}
%TCIDATA{<META NAME="SaveForMode" CONTENT="1">}
%TCIDATA{BibliographyScheme=Manual}
\IEEEoverridecommandlockouts
\newtheorem{theorem}{\bf Theorem}
\newtheorem{lemma}{\bf Lemma}
\newtheorem{property}{\bf Property}

\newtheorem{definition}{\bf Definition}
\newtheorem{convention}{\bf Convention}

\setcounter{secnumdepth}{4}

%Align
\newlength{\aligntop}
\setlength{\aligntop}{-0.5em}
\newlength{\alignbot}
\setlength{\alignbot}{-0.8\baselineskip}
\addtolength{\alignbot}{-0.1em}
\makeatletter
\renewenvironment{align}{%
  \vspace{\aligntop}
  \start@align\@ne\st@rredfalse\m@ne
}{%
  \math@cr \black@\totwidth@
  \egroup
  \ifingather@
    \restorealignstate@
    \egroup
    \nonumber
    \ifnum0=`{\fi\iffalse}\fi
  \else
    $$%
  \fi
  \ignorespacesafterend%
  \vspace{\alignbot}\par\noindent
}
\newcommand{\mysection}[1]{\vspace*{-0.47em}\section{#1}\vspace*{-0.49em}}

\begin{document}

\title{Network Formation Games Among Relay Stations in Next Generation Wireless Networks}
\author{Walid Saad, Zhu Han, Tamer Ba\c{s}ar, M\'{e}rouane Debbah, and Are Hj{\o}rungnes \vspace{-1.1cm} \thanks{
W.~Saad is with the Electrical Engineering Department, Princeton University, Princeton, N J, USA email:\texttt{saad@princeton.edu}.
Z.~Han is with Electrical and Computer Engineering Department, University of Houston, Houston, Tx, USA, email: \texttt{zhan2@mail.uh.edu}. T. Ba\c{s}ar is with the Coordinated Science Laboratory, University of Illinois at Urbana Champaign, IL, USA, email: \texttt{basar1@illinois.edu}. M.~Debbah is the Alcatel-Lucent chair, SUPELEC, Paris, France e-mail:
\texttt{merouane.debbah@supelec.fr}. A. Hj{\o}rungnes is
with the UNIK Graduate University Center, University of Oslo, Oslo, Norway, e-mails: \texttt{arehj@unik.no} This work was supported by the Research Council of Norway through the projects 183311/S10, 176773/S10, and 197565/V30,  by NSF grants CNS-0905556 and CNS-0910461, and, in part through the University of Illinois, by AFOSR MURI FA 9550- 10- 1 -0573.}}
\maketitle

\begin{abstract}
The introduction of relay station~(RS) nodes is a key feature in next generation wireless networks such as 3GPP's long term evolution advanced~(LTE-Advanced), or the forthcoming IEEE 802.16j WiMAX standard. This paper presents, using game theory, a novel approach for the formation of the tree architecture that connects the RSs and their serving base station in the \emph{uplink} of the next generation wireless multi-hop systems. Unlike existing literature which mainly focused on performance analysis, we propose a distributed algorithm for studying the \emph{structure} and \emph{dynamics} of the network. We formulate a network formation game among the RSs whereby each RS aims to maximize a cross-layer utility function that takes into account the benefit from cooperative transmission, in terms of reduced bit error rate, and the costs in terms of the delay due to multi-hop transmission. For forming the tree structure, a distributed myopic algorithm is devised. Using the proposed algorithm, each RS can individually select the path that connects it to the BS through other RSs while optimizing its utility. We show the convergence of the algorithm into a Nash tree network, and we study how the RSs can adapt the network's topology to environmental changes such as mobility or the deployment of new mobile stations. Simulation results show that the proposed algorithm presents significant gains in terms of average utility per mobile station which is at least $17.1\%$ better relatively to the case with no RSs and reaches up to $40.3\%$ improvement compared to a nearest neighbor algorithm (for a network with $10$~RSs). The results also show that the average number of hops does not exceed $3$ even for a network with up to $25$ RSs.
\end{abstract}
%{\bf Keywords:} cognitive radio, coalitional games, collaborative sensing, coalition formation, game theory.
\vspace{-0.1cm}
\mysection{Introduction}
Cooperation has recently emerged as a novel networking paradigm that can improve the performance of wireless communication networks at different levels. For instance, in order to mitigate the fading effects of the wireless channel, several nodes or relays can cooperate with a given source node in the transmission of its data to a far away destination, thereby, providing spatial diversity gains for the source node without the burden of having several antennas physically present on the node. This class of cooperation is commonly referred to as cooperative communications \cite{LT00}. It has been demonstrated that by deploying one or multiple relays \cite{LT00,SL00,HB00} a significant performance improvement can be witnessed in terms of throughput, bit error rate, capacity, or other metrics. In this regard, existing literature studied various aspects of cooperative transmission such as resource allocation \cite{ZH00}, or link-level performance assessment \cite{LT00,SL00,HB00}. Consequently, due to this performance gain that cooperative communications can yield in a wireless network, recently, the incorporation of relaying into next generation wireless networks has been proposed. In this context, the deployment of relay station~(RS) nodes, dedicated for cooperative communications, is a key challenge in next generation networks such as 3GPP's long term evolution advanced~(LTE-Advanced) \cite{LTE} or the forthcoming IEEE 802.16j WiMAX standard \cite{ST00}.

For an efficient deployment of RSs in next generation networks, several key technical challenges need to be addressed at both the uplink and downlink levels. For the downlink of 802.16j networks, in \cite{LS00}, the authors study the optimal placement of one RS which maximizes the total rate of transmission. In \cite{LTE00}, the authors study the capacity gains and the resource utilization in a multi-hop LTE network in the presence of RSs. Further, the performance of different relaying strategies in an LTE-Advanced network is studied in \cite{LTE01}. In \cite{LS01}, the use of dual relaying is studied in the context of 802.16j networks with multiple RSs. Resource allocation and network planning techniques for 802.16j networks in the presence of RSs are proposed in \cite{OTHER01}. Furthermore, the authors in \cite{LTE02} study the possibility of coverage extension in an LTE-Advanced system, through the use of relaying. In \cite{LTE03}, the communication possibilities between the RSs and the base station is studied and a need-basis algorithm for associating the RSs to their serving BS is proposed for LTE-Advanced networks. The possibilities for handover in an LTE network in the presence of RSs are analyzed in \cite{LTE04}. Other aspects of RS deployment in next generation networks are also considered in \cite{OTHER00,OTHER02,RS00,WS01,WS02}.

Although the performance assessment and operational aspects of RS deployment in next generation multi-hop networks such as LTE-Advanced or 802.16j has been thoroughly studied, one challenging area which remains relatively unexplored is the formation of the tree architecture connecting the BS to the RSs in its coverage area. One contribution toward tackling this problem in 802.16j networks has been made in \cite{RS00} through a centralized approach. However, the work in \cite{RS00} does not provide a clear algorithm for the tree formation nor does it consider cooperative transmission or multi-hop delay. In addition, a centralized approach can yield some significant overhead and complexity, namely in networks with a rapidly changing environment due to RS mobility or incoming traffic load. In our previous work \cite{WS01,WS02}, we proposed game theoretical approaches to tackle the formation of a tree structure in an 802.16j network. However, the model in \cite{WS01} does not account for the costs in terms of the delay incurred by multi-hop transmission while \cite{WS02} is limited to delay tolerant VoIP networks and does not account for the effective throughput of the nodes. In order to take into account both the effective throughput and the delays in the network due to the traffic flow (queueing and transmission delay) for generic services, new models and algorithms, inherently different from \cite{WS01,WS02}, are required.

The main contribution of this paper is to study the distributed formation of the network architecture connecting the RSs to their serving base station in next generation wireless systems such as LTE-Advanced or WiMAX 802.16j. Another key contribution is to propose a cross-layer utility function that captures the gains from cooperative transmission, in terms of a reduced bit error rate and improved effective throughput, as well as the costs incurred by multi-hop transmission in terms of delay. For this purpose, we formulate a network formation game among the RSs in next generation networks, and we build a myopic algorithm in which each RS selects the strategy that maximizes its utility. We show that, through the proposed algorithm, the RSs are able to self-organize into a Nash network tree structure rooted at the serving base station. Moreover, we demonstrate how, by periodic runs of the algorithm, the RSs can take autonomous decisions to adapt the network structure to environmental changes such as incoming traffic due to new mobile stations being deployed as well as mobility. Through simulations, we show that the proposed algorithm leads to a performance gain, in terms of average utility per mobile station, of at least $21.5\%$ compared to the case with no RSs and up to $45.6\%$ compared to a nearest neighbor algorithm.

The rest of this paper is organized as follows: Section~\ref{sec:mod} presents the system model and the game formulation. In Section~\ref{sec:form}, we introduce the cross-layer utility model and present the proposed network formation algorithm. Simulation results are presented and analyzed in Section~\ref{sec:sim}. Finally, conclusions are drawn in Section~\ref{sec:conc}.

\mysection{System Model and Game Formulation}\label{sec:prob}
Consider a network of $M$ RSs that can be either fixed, mobile, or nomadic. The RSs transmit their data in the uplink to a central base station~(BS) through multi-hop links, and, therefore, a tree architecture needs to form, in the uplink, between the RSs and their serving BS. Once the uplink network structure forms, mobile stations~(MSs) can hook to the network by selecting a serving RS or directly connecting to the BS. In this context, we consider that the MSs deposit their data packets to the serving RSs using direct transmission. Subsequently, the RSs in the network that received the data from the external MSs, can act as source nodes transmitting the received MS packets to the BS through one or more hops in the formed tree, using cooperative transmission. The considered direct transmission between an MS and its serving RS enables us to consider a tree formation algorithm that can be easily incorporated in a new or existing  wireless networks without the need of coordination with external entities such as the MSs.

To perform cooperative transmission between the RSs and the BS, we consider a decoded relaying multi-hop diversity channel, such as the one in~\cite{HB00}. In this relaying scheme, each intermediate node on the path between a transmitting RS and the BS combines, encodes, and re-encodes the received signal from all preceding terminals before relaying (decode-and-forward). Formally, every MS $k$ in the network constitutes a source of data traffic which follows a Poisson distribution with an average arrival rate $\lambda_k$. With such Poisson streams at the entry points of the network (the MSs), for every RS, the incoming packets are stored and transmitted in a first-in first-out~(FIFO) fashion and we consider that we have the Kleinrock independence approximation \cite[Chap. 3]{BE00} with each RS being an M/D/1 queueing system\footnote{Any other queueing model, e.g., M/M/1, can also be accommodated.}. With this approximation, the total traffic that an RS $i$  receives from the MSs that it is serving is a Poisson process with an average total arrival rate of $\Lambda_i = \sum_{l\in \mathcal{L}_{i}} \lambda_{l}$ where $\mathcal{L}_i$ is the set of MSs served by an RS $i$ of cardinality $|\mathcal{L}_i|=L_i$. Moreover, RS $i$ also receives packets from RSs that are connected to it with a total average rate $\Delta_i$. For these $\Delta_i$ packets (received from other RSs), the sole role of RS $i$ is to relay them to the next hop. In addition, any RS $i$ that has no assigned MSs and no connected RSs ($\mathcal{L}_i = \emptyset$, $\Lambda_i=0$, and $\Delta_i=0$), transmits ``HELLO'' packets, generated with a Poisson arrival rate of $\eta_0$ in order to maintain its link to the BS active during periods of no actual traffic in the network. An illustrative example of this model is shown in Fig.~\ref{fig:ill}.
\begin{figure}[!t]
\begin{center}
\includegraphics[width=80mm]{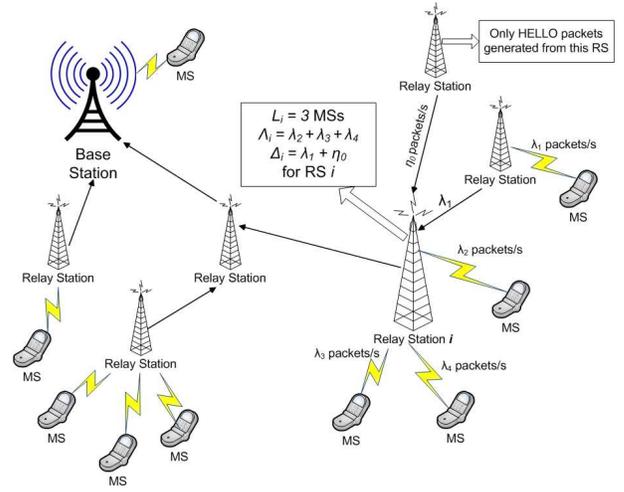}
\end{center}\vspace{-0.5cm}
\caption {A prototype of the uplink tree model.}\vspace{-0.6cm} \label{fig:ill}
\end{figure}
%\subsection{Game Formulation and Strategies}

Given this network, the main objective is to provide a formulation that can adequately  model
the interactions between the RSs that seek to form the uplink multi-hop tree architecture. For this purpose, we refer to the analytical framework of network formation games \cite{DM00,TJ00,JD00,TUT00}. Network formation games constitute a subclass of problems which involve a number of independent decisions makers (players) that interact in order to form a suited graph that connects them. The final network graph $G$ that results from a given network formation game is highly dependent on the goals, objectives, and incentives of every player in the game. Consequently, we model the proposed uplink tree formation problem as a network formation game among the RSs where the result of the interactions among the RSs is a \emph{directed} graph $G(\mathcal{V},\mathcal{E})$ with $\mathcal{V}=\{1,\ldots,M+1\}$ denoting the set of all vertices ($M$ RSs and the BS) that will be present in the graph and $\mathcal{E}$ denoting the set of all edges (links) that connect different pairs of RSs. Each directed link between two RSs $i$ and $j$, denoted $(i,j) \in \mathcal{E}$, corresponds to an uplink traffic flow from RS $i$ to RS $j$. We define the following notion of a path:
\begin{definition}
Given any network graph $G(\mathcal{V},\mathcal{E})$, a \emph{path} between two nodes $i \in \mathcal{V}$ and $j \in \mathcal{V}$ is defined as a sequence of nodes $i_1,\ldots,i_K$ (in $\mathcal{V}$) such that $i_1 = i$, $i_K =  j$ and each directed link $(i_k,i_{k+1}) \in G$ for each $k \in \{1,\ldots,K-1\}$.
\end{definition}

In this paper, we consider solely multi-hop tree (or forest, if some parts of the graph are disconnected) architectures, since such architectures are ubiquitous in next generation networks \cite{ST00,LTE00,LTE01}. In this regard, throughout the paper we adopt the following convention:
\begin{convention}
Each RS $i$ is connected to the BS through at most \emph{one} path, and, thus,  we denote by $q_i$ the path between any RS $i$ and the BS whenever this path exists.
\end{convention}

Finally, we delineate the possible actions or strategies that each RS can take in the proposed network formation game. In this regard, for each RS $i$, the action space consists of the RSs (or the BS) that RS $i$ wants to use as its next hop. Therefore, the strategy of an RS $i$ is to select the link that it wants to form from its available action space. We note that, an RS $i$ cannot connect to an RS $j$ which is already connected to $i$, in the sense that if $(j,i) \in G$, then $(i,j) \notin G$. Hence, for a given graph $G$ that governs the current network architecture, we  let $\mathcal{A}_i= \{j \in \mathcal{V}\setminus\{i\}| (j,i) \in G \}$ denote the set of RSs from which RS $i$ accepted a link $(j,i)$,  and $\mathcal{S}_i = \{(i,j) | j \in \mathcal{V}\setminus(\{i\} \bigcup \mathcal{A}_i)  \}$ denote the set of links corresponding to the nodes (RSs or the BS) with which RS $i$ wants to connect (note that $i$ cannot connect to RSs that are already connected to it, i.e., RSs in $\mathcal{A}_i$).  Accordingly, the strategy of an RS $i$ is to select the link $s_i \in \mathcal{S}_i$ that it wants to form, i.e., choose the RS that it will connect to. Based on Convention~1, an RS can be connected to at most \emph{one} other node in our game so selecting to form a link $s_i$  implicitly implies that RS $i$ will \emph{replace} its previously connected link (if any) with the new link $s_i$. Further, to each selection $s_i$ by an RS $i$ corresponds a path $q_i$ to the BS (if $s_i = \emptyset$, then the RS chooses to be disconnected from the network).\vspace{-0.4cm}

\section{Network Formation Game: Utility Function and Algorithm}\label{sec:form}\vspace{-0.1cm}
\subsection{Cross-layer Utility Function}
Our next step is to define a utility function that can capture the incentives of the RSs to connect to each others. For this, we propose a cross-layer utility function that takes into account the performance measures in terms of the packet success rate~(PSR) as well as the delay induced by multi-hop transmission. Hence, considering any tree network graph $G$, each RS in the network will be given a positive utility for every packet that is transmitted/relayed successfully to the BS out of all the packets that this RS has received from the external MSs. In this regard, every packet transmitted by any RS is subject to a bit error rate~(BER) due to the communication over the wireless channel using one or more hops. For any data transmission between an RS $V_1 \in \mathcal{V}$ to the BS, denoted by $V_{n+1}$, going through $n-1$ intermediate RSs $\{V_{2},\ldots,V_{n}\} \subset \mathcal{V}$, let $N_r$ be the set of all receiving terminals, i.e., $N_r = \{V_2\ldots V_{n+1}\}$ and $N_{r(i)}$ be the set of terminals that transmit a signal received by a node $V_i$. Hence, for an RS $V_{i}$ on the path from the source $V_1$ to the destination $V_{n+1}$, we have $N_{r(i)} = \{V_1,\ldots,V_{i-1}\}$. Therefore, given this notation, the BER achieved at the BS $V_{n+1}$ between a source RS $V_1 \in \mathcal{V}$ that is sending its data to the BS along a path $q_{V_1}=\{V_1,\ldots,V_{n+1}\}$ can be calculated through the  tight upper bound given in \cite[Eq. (10)]{HB00} for the decoded relaying multi-hop diversity channel with Rayleigh fading and BPSK modulation\footnote{The approach in this paper is not restricted to this channel and BPSK signal constellation since the algorithm proposed in the following section can be tailored to accommodate other types of relay channels as well as other modulation techniques.} as follows
\begin{align}\label{eq:ber}
P^e_{q_{V_1}} &\le \; \sum_{N_i\in N_r} \frac{1}{2}\left(\sum_{N_k \in N_{r(i)}}\left[\prod_{\underset{N_j \neq N_k}{N_j\in N_{r(i)}} } \frac{\gamma_{k,i}}{\gamma_{k,i}-\gamma_{j,i}}\right.\right.
\nonumber\\
&\left.\left.\times\left(1-\sqrt{\frac{\gamma_{k,i}}{\gamma_{k,i}+1}}\right)\right]\right).
\end{align}
Here, $\gamma_{i,j}=\frac{P_i\cdot h_{i,j}}{\sigma^2}$ is the average received SNR at node $j$ from node $i$ where $P_i$ is the transmit power of node $i$, $\sigma^2$ the noise variance and $h_{i,j} = \frac{1}{d_{i,j}^\mu}$ is the path loss with $d_{i,j}$ the distance between $i$ and $j$ and $\mu$ the path loss exponent. Finally, for RS $i$ which is connected to the BS through a \emph{direct transmission} path $q_{i}^{d}$ with no intermediate hops, the BER can be given by $P^e_{q_i^d} = \frac{1}{2} \left(1-\sqrt{\frac{\gamma_{i,BS}}{1 + \gamma_{i,BS}}}\right)$ \cite{SL00},\cite{HB00}; where $\gamma_{i,BS}$ is the average received SNR at the BS from RS $i$. Using the BER expression in (\ref{eq:ber}) and by having no channel coding, the PSR $\rho_{i,q_i}$ perceived by an RS $i$ over any path $q_i$ is defined as follows
\begin{align}\label{eq:psr}
\rho_{i,q_i}(G) = (1 - P^e_{q_i})^{B},
\end{align}
where $B$ is the number of bits per packet. The PSR is a function of the network graph $G$ as the path $q_i$ varies depending on how RS $i$ is connected to the BS in the formed network tree structure.

Communication over multi-hop wireless links yields a significant delay due to multi-hop transmission as well as buffering. Therefore,  we let $\tau_{i,q_i}$ denote the average delay over the path $q_i = \{i_1,\ldots,i_k\}$ from an RS $i_1 = i$ to the BS. Finding the exact average delay over a path of consecutive queues is a challenging problem in queueing systems \cite{BE00}. One possible approach for measuring the average delay along a path $q_i$ in a network  with Poisson arrivals at the entry points is to consider the Kleinrock approximation as mentioned in the previous section. In this context, the average delay over any path $q_i$ can be given by \cite[Chap. 3, Eqs.~(3.42), (3.45), and (3.93)]{BE00}
 \begin{equation}\label{eq:del}
\tau_{i,q_i}(G) \! =\!\! \sum_{(i_k,i_{k+1}) \in q_i}\!\!\!\! \left(\frac{\Psi_{i_k,i_{k+1}}}{2\mu_{i_k,i_{k+1}}(\mu_{i_k,i_{k+1}} - \Psi_{i_k,i_{k+1}})} + \frac{1}{\mu_{i_k,i_{k+1}}}\right).
\end{equation}
where $\Psi_{i_k,i_{k+1}} = \Lambda_{i_k} + \Delta_{i_k}$ is the total traffic (packets/s) traversing link $(i_k,i_{k+1}) \in q_i$ between RS $i_k$ and RS $i_{k+1}$ and  originating from the $L_{i_k}$ MSs in the set $\mathcal{L}_{i_k}$ of MSs connected to RS $i_k$ ($\Lambda_{i_k}=\sum_{i \in \mathcal{L}_{i_k}}\lambda_i$) and from all RSs  that are connected to $i_k$ ($\Delta_{i_k}=\sum_{j\in \mathcal{A}_{i_k}}\Lambda_{j}$). The ratio $\frac{1}{\mu_{i_k,i_{k+1}}}$ represents the average transmission time (service time) on link $(i_k,i_{k+1}) \in q_i$  with $\mu_{i_{k},i_{k+1}}$ being the service rate on link $(i_k,i_{k+1})$. This service rate is given by $\mu_{i_{k},i_{k+1}}= \frac{C_{i_k,i_{k+1}}}{B}$ with
%\begin{equation}
$C_{i_k,i_{k+1}} = W \log{(1+\nu_{i_k,i_{k+1}})}$
%\end{equation}
 the capacity of the direct transmission between RS $i_k$ and RS~$i_{k+1}$, where $\nu_{i_k,i_{k+1}}=\frac{P_{i_k}h_{i_k,i_{k+1}}}{\sigma^2}$ is the received SNR from RS $i_k$ at RS $i_{k+1}$, and $W$ is the bandwidth available for RS $i_k$ which is assumed the same for all RSs in the set of vertices $\mathcal{V}$, without loss of generality. Similar to the PSR, the delay depends on the paths from the RSs to the BS, and, hence, it is a function of the network graph $G$.

A suitable criterion for characterizing the utility in networks where the users' quality of service is sensitive to throughput as well as to delay is the concept of \emph{system power}. In this context, power is defined as the ratio of some power of the throughput and the delay \cite{KL00}. Hence, the concept of power is an attractive notion that allows one to capture the fundamental tradeoff between throughput and delay in the proposed network formation game. In fact, the concept of power has been used thoroughly in the  literature to model applications where there exists a tradeoff between throughput and delay\cite{PW03,PW00,PW01,PW02}. Consequently, given the delay and the PSR, we define the utility of an RS $i$ with $L_i$ connected MSs, as the power achieved by $i$ which is given by
\begin{equation}\label{eq:utilPower}
u_i(G)= \begin{cases} \displaystyle  \frac{\left(\Lambda_{i}\cdot \rho_{i,q_i}(G) \right)^{\beta_i}}{\tau_{i,q_i}(G)^{(1-\beta_i)}}, & \mbox{if } L_i > 0,\\ \displaystyle \frac{\left(\eta_{0} \cdot \rho_{i,q_i}(G)\right)^{\beta_i}}{\tau_{i,q_i}(G)^{(1-\beta_i)}}, &
\mbox{if } L_i = 0, \end{cases}
\end{equation}
where $\tau_{i,q_i}(G)$ is the delay given by (\ref{eq:del}),  $\Lambda_{i}\cdot \rho_{i,q_i}(G)$ represents the effective throughput of RS $i$ and $\beta_i \in (0,1)$ is a tradeoff parameter. The utility in (\ref{eq:utilPower}) can model a general class of services, with each class of service having a different $\beta_i$ which can be chosen individually by the RS. As $\beta_i$ increases, the service becomes more delay tolerant and more throughput demanding.  For an RS $i$, the parameter $\beta_i$ can depend on the requirements of its served MSs.  For example, if each MS connected to RS $i$ requests a different value for $\beta_i$, the RS can select the $\beta_i$ to be equal to the value requested by the MS that is most delay sensitive, i.e., the smallest value requested from all connected MSs. As an alternative, the RS can select a value of $\beta_i$ that is averaged over all the values requested from the MSs.
Note that, unless stated otherwise, throughout the rest of the paper the term ``power'' will refer to the ratio of throughput to delay and not to the transmit power of the RSs or MSs unless clearly stated as ``transmit power''.

Once the RSs form the tree topology, one needs to assess the performance of the MSs in terms of the power achieved by these MSs (considered as MS utility). In order to compute the utility of the MSs, the PSR as well as the delay over the whole transmission from MS to BS must be taken into account. Hence, given the proposed network model in Section~\ref{sec:mod}, for each MS $i \in \mathcal{L}_j$ served by an RS $j$, the PSR is given by
\begin{align}\label{eq:MSPSR}
\zeta_{i,j}(G) = \rho_{i,(i,j)}\cdot\rho_{j,q_j}(G),
\end{align}
where $\rho_{i,(i,j)}$ is the PSR on the direct transmission between MS $i$ and RS $j$ (which does not depend on the existing network graph $G$ between the RSs) and $\rho_{j,q_j}(G)$ is the PSR from RS $j$ to the BS along path $q_j$ given by (\ref{eq:psr}) (the path $q_j$ can be either a multi-hop path or a direct transmission depending on how RS $j$ is connected  in the graph $G$ that governs the RSs' network). Furthermore, for any MS $i\in \mathcal{L}_j$ connected to an RS $j$, the delay for transmitting the data to the BS is given by (\ref{eq:del}) by taking into account, in addition to the delay on the RS's path $q_j$, the data traffic on the link $(i,j)$ between the MS and the RS, i.e., the buffering and transmission delay at the MS level. Having the PSR given by (\ref{eq:MSPSR}) and the delay, the utility of a MS $i$ connected to RS $j$ is given by
\begin{equation}\label{eq:utilpowerMS}
v_i(G)= \displaystyle \frac{\left(\lambda_{i}\cdot \zeta_{i,j}(G) \right)^{\beta_i}}{\tau_{i,q_j}(G)^{(1-\beta_i)}}.
\end{equation}
Note that, the MS and RS utilities in (\ref{eq:utilPower}) and (\ref{eq:utilpowerMS}) are selected to represent the node's power which is a metric that links the effective throughput to the delay. For the RSs, the power in (\ref{eq:utilpowerMS}) is a function of the  of the metrics needed for evaluating the MSs' power since it depends on the MSs traffic and their route to destination (with the RS as origin). The MSs power in (\ref{eq:utilpowerMS}) is, in fact, their QoS metric of interest which depends on the direct MS-RS link in addition to the subsequent path from the RS to the BS (which is completely captured by (\ref{eq:utilPower})). The parameter $\beta_i$ in (\ref{eq:utilpowerMS}) is service-dependent and represents how delay tolerant the service used by a certain MS $i$ is.

Consequently, in this paper (unless stated otherwise) we consider that whenever an MS enters the network, it will connect to the RS which maximizes its utility in (\ref{eq:utilpowerMS}) given the current network topology $G$. This MS assignment is considered fixed as long as the RSs' network does not change, otherwise, the MSs can re-assess their utilities and change their assignment once to adapt to the changes in the RSs' network.

Thus, throughout this paper, we mainly deal with the network formation game among the RSs while considering that the MS assignment is fixed once the MS enters the network. The MSs are, as previously mentioned, considered as external sources of traffic. The main advantages behind devising a network formation scheme that relies mainly on the RSs are as follows:
\begin{enumerate}
\item The RSs are typically nodes (fixed, mobile, or nomadic) that are owned by the network operator and that will always be present in the network (except in cases of failures for example). In contrast, the MSs will typically connect to the network for a limited amount of time and, then, leave the network once their connection ends. For this purpose, devising a network formation algorithm among the RSs  has the advantage that it does not rely on external entities such as the MSs which can be entering and exiting the network at random points in time and whose presence in the network can be brief. Further, an RS-only network formation algorithm, can be incorporated in both existing and newly deployed networks.

    \item Although the studied network formation game is between the RSs, as will be seen in Section~\ref{sec:sim}, a significant performance improvement will be witnessed in terms of MS utility as per (\ref{eq:utilpowerMS}). This is due to the fact that, even though network formation is considered only between the RSs, the utilities defined by the RSs in (\ref{eq:utilPower}) take into the key factors impacting the communication path of each MS (e.g., the traffic of the MS and its overall path to destination, i.e., to the BS), except the direct link from MS to RS which is accounted for in the utility of the MS in (\ref{eq:utilpowerMS}). This design improves the performance of the MSs while the MSs do not need to worry about having any knowledge of the network topology or the structure of the tree. The MSs need only to communicate, via a control/feedback channel with the RSs, to select their serving RS based on (\ref{eq:utilpowerMS}). For example, the MS utility in (\ref{eq:utilpowerMS}) can be computed by the RSs on behalf of any MS requesting a connection and then, it is fed back over the control channel. This scheme for assigning MSs to their serving access point, i.e., RSs in this paper, is a standard and well-known method which is already used and deployed in current networks (e.g., cellular or broadband networks) \cite{ZH00}. In consequence, utilizing an RS-only network formation game provides a performance gain to the MSs and does not require additional changes to the standard operation of these MSs.
    \item The MSs can consist of a heterogenous range of devices with different capabilities ranging from small mobile devices to PDAs, laptops, or smartphones. As a result, involving the MSs in network formation would require programming a broad range of devices to act strategically while making network formation decisions. This process can be quite complex in practice. In contrast, the RSs are, in general, standardized nodes (e.g. IEEE 802.16j or LTE-advanced) and, thus,  allowing them to play a network formation game is more reasonable than in the case where the MSs are also involved in the game. One must also remark that the RSs will generally have better processing capabilities than the MSs.

    \item The model proposed in this paper studies a network formation game between a network of RSs with an external incoming traffic which typically comes from MSs. Nonetheless, this external traffic can also come from content providers or servers that need to select an RS to connect to (through a wired or optical network). Hence, one advantage of the proposed model is that it is general enough to accommodate networks with any type of external traffic whether it comes from MSs, content providers, or other sources.
\end{enumerate}
In summary, by designing an RS-based network formation algorithm we are able to extract interesting performance gains, for the MSs, while requiring little interactions or decision making from the MSs which are often devices with limited capabilities that connect to the network for a relatively short period. Nonetheless, for future work, the model considered in this paper can be extended to jointly considers the strategies of the RSs and the MSs. In particular, when considering both the MSs and RSs as players in a network formation game, we can define an interesting and novel multi-leader multi-follower Stackelberg game for network formation. In this game,  the MSs are considered as leaders, i.e., players who can announce their strategies before the other players, known as followers, i.e., the RSs, make their strategy choices. Although the current paper can constitute a key building block for such a multi-leader multi-follower game, this extended model is out of the scope of this paper and will be the subject of future work.

\subsection{Network Formation Algorithm}\vspace{-0.35em}\label{sec:dyn}
Given the devised utility functions in the previous subsection, the next step in the proposed RSs' network formation game is to find an algorithm that can model the interactions among the RSs that seek to form the network tree structure. First, we show that, for any network formation algorithm, the resulting graph in the proposed game is a connected tree structure as follows:
\begin{property}
The network graph resulting from any network formation algorithm for the proposed RSs game is a \emph{connected} directed tree structure rooted at the BS.
\end{property}
\begin{proof}
Consider an RSs network graph $G$ whereby an RS $i$ is disconnected from the BS, i.e., no path of transmission (direct or multi-hop) exists between $i$ and the BS. In this case, one can see that,   the delay for all the packets at the disconnected RS $i$ is infinite, i.e., $\tau_{i,q_i}(G)=\infty$, and, thus, the corresponding power is $0$ as per the utility function in (\ref{eq:utilPower}). As a result, there is no incentive for any RS in the network to disconnect from the BS since such a disconnection will drastically decrease its utility. Hence,  any network graph $G$ formed using the proposed RSs network formation game is a connected graph and due to Convention~1, this graph is a tree rooted at the BS.
\end{proof}

A direct result of this property is that, if any RS is unable to connect to another suitable RSs for forming a link, this RS will connect to the BS using direct transmission. In this regards, we consider that the initial starting point for our network formation game is a star topology whereby all the RSs are connected directly to the BS, prior to interacting for further network formation decisions.

Whenever an RS $i$ plays a strategy $s_i \in \mathcal{S}_i$  while all the remaining RSs maintain a vector of strategies $\boldsymbol{s}_{-i}$, we let $G_{s_i,\boldsymbol{s}_{-i}}$ denote the resulting network graph.  By inspecting the RS utility in (\ref{eq:utilPower}), one can clearly notice that, whenever an RS $j$ accepts a link, due to the increased traffic that it receives, its utility may decrease as the delay increases. Although each RS $i \in \mathcal{N}$ can play any strategy from its strategy space $\mathcal{S}_i$, there might exist some link $s_i = (i,j) \in \mathcal{S}_i$ where the receiving RS, i.e., RS $j$, does not accept the formation of $s_i$, if this leads to a significant decrease in its utility. In this regard, denoting by $G+s_i$ as the graph $G$ modified when an RS $i$ deletes its current link in $G$ and adds the link $s_i=(i,j)$, we define the concept of a \emph{feasible} strategy as follows:
\begin{definition}
A  strategy $s_i \in \mathcal{S}_i$, i.e., a link $s_i = (i,j)$, is a  \emph{feasible strategy} for an RS $i \in \mathcal{V}$ if and only if $u_{j}(G_{s_i,\boldsymbol{s}_{-i}}+s_i) \ge u_{j}(G_{s_i,\boldsymbol{s}_{-i}}) - \epsilon$ where $\epsilon$ is a small positive number. For any RS $i \in \mathcal{V}$, the set of all feasible strategies is denoted by $\hat{\mathcal{S}}_i \subseteq \mathcal{S}_i$.
\end{definition}

A feasible strategy for an RS $i$ is, thus, a link $s_i=(i,j)$ which the receiving RS $j$ is willing to form with RS $i$. Hence, given a network graph $G$, a feasible strategy for any RS $i \in \mathcal{V}$ is to form a link with an RS among all the RSs that are willing to \emph{accept} a connection from RS $i$ (and not \emph{all} RSs), i.e., a feasible path, which maximizes its utility. On the other hand, any RS $j \in \mathcal{V}$ is willing to accept a connection from any other RS $i \in \mathcal{V}$ as long as the formation of the link $(i,j)$ does not decrease the utility of $j$ by more than $\epsilon$. The main motivation for having $\epsilon >0$ (sufficiently small) is that, in many cases, e.g., when the network has only HELLO packets circulating (no MS traffic), RS $j$ might be willing to accept the formation of a link which can slightly decrease its utility at a given moment, but, as more traffic is generated in the network, this link can entail potential future benefits for RS $j$ stemming from an increased effective throughput (recall that the utility in (\ref{eq:utilPower}) captures the tradeoff between effective throughput and delay).

For any RS $i \in \mathcal{V}$, given the set of feasible strategies $\hat{\mathcal{S}}_i$, we define the  \emph{best response} for RS $i$ as follows~\cite{JD00}.
\begin{definition}
A  strategy $s_i^{*} \in \hat{\mathcal{S}}_i$ is a  \emph{best response} for an RS $i \in \mathcal{V}$ if $u_i(G_{s_i^{*},\boldsymbol{s}_{-i}}) \ge u_i(G_{s_i,\boldsymbol{s}_{-i}}),\ \forall s_i \in \hat{\mathcal{S}}_i$. Thus, the best response for RS $i$ is to select the \emph{feasible} link that maximizes its utility given that the other RSs maintain their vector of feasible strategies $\boldsymbol{s}_{-i}$.
\end{definition}

Subsequently, given the various properties of the RS network formation game, we devise a network formation algorithm based on the feasible best responses of the RSs. For this purpose, first, we consider that the RSs are myopic, such that each RS aims at improving its utility given only the current state of the network without taking into account the future evolution of the network. Developing an optimal myopic network formation algorithm is highly complex since there exists no formal rules for network formation in the literature \cite{DM00}. For instance, depending on the model, utilities, and incentives of the players, different network formation rules can be applied. In fact, the topic of network formation is currently hot in game theory and under a lot of research (\cite{DM00,TJ00,JD00} and references therein). The challenging aspect of this problem stems from the fact that one deals with discrete strategy sets (i.e., forming links) and with the formation of network graphs. Further, when dealing with practical utility functions such as (\ref{eq:utilPower}), the problem becomes more challenging. In this context, the game theoretical literature on network formation games studies various myopic algorithms for different game models with directed and undirected graphs \cite{DM00,TJ00,JD00}. For the network formation game among the RSs, we build a myopic algorithm for network formation inspired from those applied in economics problem (e.g., in \cite{DM00} and \cite{JD00}), but modified to accommodate the specifics of the model studied in this paper. In this regard, we define an algorithm where each round is mainly composed of two phases: a myopic network formation phase and a multi-hop transmission phase.

During myopic network formation, the RSs engage in pairwise interactions, sequentially, in order to make their network formation decisions. In this phase, we consider that the RSs make their decisions sequentially in a random order. In practice, this order can be decided by which RS requests first to form its link. Thus, in the myopic network formation phase, each RS $i$ can select a certain feasible strategy which allows it to improve its payoff. An \emph{iteration} consists of a single sequence of plays during which all $M$ RSs have made their strategy choice to myopically react to the choices of the other RSs. The myopic network formation phase can consist of one or more iterations. In every iteration $t$, during its turn, each RS $i$ chooses to play its best response $s_i^{*} \in \hat{\mathcal{S}}_i$  in order to maximize its utility at each round given the current network graph resulting from the strategies of the other RSs. The feasible best response of each RS can be seen as a \emph{replace} operation, whereby the RS will replace its current link to the BS with another link that maximizes its utility (if such a link is available). Hence, the proposed network formation algorithm is based on the iterative feasible best responses of the RSs.

 When it converges, such an algorithm is guaranteed to reach a network where no RS can improve its utility by changing its current link, i.e., a Nash network, defined as follows for the studied game \cite{JD00}:
\begin{definition}\label{def:nash}
A network graph $G(\mathcal{V},\mathcal{E})$ in which no RS $i$ can improve its utility by a unilateral change in its feasible strategy $s_i \in \hat{\mathcal{S}}_i$  is a \emph{Nash network} in the feasible strategy space $\hat{\mathcal{S}}_i, \ \forall i \in \mathcal{V}$.
\end{definition}

A Nash network is simply the concept of a Nash equilibrium applied to a network formation game. In the proposed game, a Nash network would, thus, be a network where no RS can improve its utility, by unilaterally changing its current link, given the current strategies of all other RSs.

Having an analytical proof for the convergence of the network formation phase of the algorithm, when dealing with practical utilities and discrete network formation strategies is difficult \cite{DM00,BA00}. In fact, in wireless applications, even in classical problems such as power control or peer-to-peer incentives (e.g., see \cite{ZH00,WO00,BACD00,PP06}), it is common to propose best-response algorithms even though no analytical proof is found for them, since such algorithms can, in most cases, converge to a Nash equilibrium (or Nash network in the case of network formation).

The iterative best response algorithm we propose in this paper can, thus, either converge to a Nash network or cycle between a number of networks, in the case of non-convergence. In order to avoid such undesirable cycles, one can introduce additional constraints on the strategies of the RSs such as allowing the RSs to select their feasible best response, not only based on the current network, but also on the history of moves or strategies taken by the other RSs, e.g., in repeated games, this is used to ensure reaching an equilibrium where cooperation is enforced \cite{BA00}. Another example is in coalition formation algorithms where, to ensure convergence to a stable point, one can allow the players to experiment, i.e., to select, based on a given history, a coalition that is not the best for them so as to deviate from a cycling behavior \cite{CO00}. Alternatively, in the non-convergence case, the RSs may be instructed by the network operator to find a mixed-strategy Nash network which is guaranteed to exist~\cite{BA00}.

Motivated by such approaches, in the network formation phase of the proposed algorithm, we allow the RSs to observe the visited networks during the occurrence of network formation. In consequence, whenever an RS, based on its history observation, suspects that a cycling behavior is bound to occur, it can deviate from its feasible best response strategy by selecting, instead, the best response that yields a network which was \emph{not previously visited} (at the end of past iterations) more than a certain number of times. Formally, we define a history function $h^t(G_t^i)$ which represents, for every network $G_t^i$ reached at an iteration $t$ during the turn of an RS $i$, the number of times this graph was visited at the end of iterations $1$ to $t-1$. Further, we define a threshold $\varrho$ (positive integer) for $h^t(G_t^i)$ above which an RS $i$ is no longer interested in visiting this network, since visiting this network may lead to a cyclic behavior. Note that, the history function is assumed to be a common knowledge which the RSs can acquire with little complexity (each RS can be made aware of the graph reached at the end of any iteration $t$ by the BS or neighboring RSs).  Thus, at any iteration $t$, if an RS finds out that, by choosing its feasible best response $s_i^*$, it will yield a network $G_t^i + s_i^*$ such that $h^t(G_t^i + s_i^*) > \varrho$, then, this RS will experiment alternative actions by choosing another feasible strategy $s_i \in \hat{\mathcal{S}}_i$ which improves its utility and does not lead to a network $G_t^i+s_i$ with  $h^t(G_t^i+s_i) > \varrho$. Note that, an RS will always try to use its best response first, without reliance on history and it will only use history once and if needed. A critical value $\hat{\varrho}$ for the threshold $\varrho$ is set by the operator so as to control the behavior of the network formation process. This critical value is used by the RSs, only if $\hat{\varrho} \le \varrho$ In essence, if, during the turn of an RS $i$ at an iteration $t$ there exists a network $\hat{G}_t^i$ that has been visited more than $\hat{\varrho}$ (but less than $\varrho$ times), i.e., $h^t(\hat{G}_t^i) > \hat{\varrho}$, then the RSs are instructed to seek a mixed-strategy Nash network.  The mixed-strategy Nash network is a stable network graph $G$ in which each RS can use a number of links,  with different probabilities, for transmitting its data. This is related to the concept of a mixed-strategy Nash equilibrium~\cite{BA00}. The main advantage of seeking a mixed-strategy Nash network is the fact that this network \emph{always exists} independent of the RSs/MSs locations, the circulating traffic, or wireless channel~\cite{BA00}. In this case, for finding the mixed-strategy Nash network, the RSs can use well-known algorithms from the theory of learning in games such as \emph{fictitious play} or evolutionary approaches~\cite{TL00}.

By using these schemes along with iterative best response, multiple iterations will be run until convergence which is guaranteed by the following theorem:
\begin{theorem}\label{th:one}
Given any initial network graph $G_0$, the myopic network formation phase of the proposed algorithm converges to a final network graph $G_T$ after $T$ iterations.
\end{theorem}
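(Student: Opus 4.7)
The plan is to exploit the finiteness of the graph state space together with the two safety mechanisms (history-based cycle breaking and the eventual fall-back to a mixed-strategy search) to rule out an infinite run of the algorithm. First I would observe that, since there are $M$ RSs and each RS may form at most one outgoing link to a node in $\mathcal{V}\setminus\{i\}$ (by Convention~1), the set of reachable tree graphs is finite, with cardinality bounded by $(M+1)^M$. Consequently, any infinite sequence of iterations must revisit some graph infinitely often, so non-convergence of pure best-response dynamics is equivalent to the existence of a cycle of graphs $G^{(1)},G^{(2)},\dots,G^{(K)},G^{(1)},\dots$ through the state space.

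Next I would split into two cases. In the convergent case, the best-response dynamics reach, after finitely many iterations, a graph $G_T$ where no RS has a feasible strategy strictly improving its utility; by Definition~\ref{def:nash} this $G_T$ is a Nash network and the algorithm terminates. In the cycling case, I would argue that each graph $G^{(k)}$ lying on the cycle is visited with strictly increasing frequency as iterations progress, so the history counter $h^t(G^{(k)})$ grows without bound. Therefore there exists a finite iteration at which $h^t(G^{(k)}) > \varrho$ for some graph the acting RS $i$ would otherwise select; at that point, by construction, RS $i$ is forced to replace its best response by an alternative feasible strategy whose image has been visited at most $\varrho$ times, thereby leaving the cycle.

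I would then iterate this argument: each time a cycle appears, the history mechanism forces a deviation into a graph outside the cycle, reducing the set of graphs that can still be endlessly revisited. Because the state space is finite and the history thresholds $\varrho$ and $\hat{\varrho}$ are fixed, only finitely many such deviations are possible before either (a) a graph with no unilaterally improving feasible move is reached, giving a pure-strategy Nash network $G_T$, or (b) the critical threshold $\hat{\varrho}$ is crossed, at which point the RSs switch to the mixed-strategy search. In case (b) I would invoke the standing fact, cited in the text, that a mixed-strategy Nash network exists in this finite game and that fictitious play (or an analogous learning scheme) produces it in finite running time for the algorithmic purposes considered here; the output of that procedure is then declared to be $G_T$.

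The main obstacle is the cycling case: one must make sure that the history-driven deviation does not itself generate a new cycle that is never detected. I would address this by noting that the history function $h^t$ is monotone nondecreasing over iterations and is common knowledge, so every graph that is revisited sufficiently many times eventually crosses both $\varrho$ and $\hat{\varrho}$; since the number of distinct graphs is finite, the number of iterations needed to exhaust all graphs that can be visited more than $\hat{\varrho}$ times is bounded by $\hat{\varrho}\cdot(M+1)^M$, after which the mixed-strategy fall-back is triggered and termination is guaranteed. This yields the required finite $T$.
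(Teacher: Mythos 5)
Your proof is correct and follows essentially the same route as the paper's: finiteness of the set of possible tree graphs, the history threshold $\varrho$ excluding over-visited graphs so that best-response cycling cannot persist, and the fall-back to a mixed-strategy Nash network (sought via fictitious play or similar learning) once $\hat{\varrho}$ is exceeded. Your write-up is somewhat more explicit than the paper's (the counting bound $\hat{\varrho}\cdot(M+1)^{M}$ and the check that history-driven deviations cannot spawn undetected cycles), and its only soft spot---treating the mixed-strategy learning phase as terminating in finite time---is precisely the assumption the paper also leaves unproved and outside its scope.
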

\begin{proof}
Every iteration $t$ of the myopic network formation phase of the proposed algorithm can be seen as a sequence of feasible best responses played by the RSs. In this regard, denoting by $G_t$ the graph reached at the \emph{end of} any iteration $t$, the myopic network formation phase consists of a sequence such as the following (as an example)
\begin{equation}\label{eq:trans}
G_0\rightarrow G_1 \rightarrow G_2 \rightarrow \cdots \rightarrow G_t \rightarrow \cdots
\end{equation}
First consider the case in which during the turn of an RS $i$ at any iteration $t$ there does \emph{not} exist any network $\hat{G}_t^i$ that has been visited more than $\hat{\varrho}$, i.e., $h^t(\hat{G}_t^i) \le \hat{\varrho}$ for any $\hat{G}_t^i$. In this case, at any iteration $t$,  denote by $G_t^i$ as the network reached at the turn of an RS $i$. At this iteration, RS $i$ attempts to either select its feasible best response $s_i^* \in \hat{\mathcal{S}}_i$ if $h(G_t^i + s_i^*) \le \varrho$, or, otherwise, it selects a feasible strategy $s_i \in \hat{\mathcal{S}},\ s_i \neq s_i^*$, which improves its utility and yields a network $G_t^i + s_i$ such that $h(G_t^i + s_i) \le \varrho$. This process continues until finding an iteration where no RS can find any strategy to play (i.e.,  no utility improvement is possible for any RS $i$ using a feasible strategy that does not yield a network which has been visited more than $\varrho$ times). Reaching such an iteration is guaranteed by the fact that the number of spanning trees for any graph is \emph{finite}. As a result, the sequence in (\ref{eq:trans}) will always converge to a final graph $G_T$ after $T$ iterations, irrespective of the initial graph $G_0$.

Further, in the case where, during the turn of an RS $i$ at an iteration $t$ there exists a network $\hat{G}_t^i$ that has been visited more than $\hat{\varrho}$, i.e., $h^t(G_t^i) > \hat{\varrho}$, the RSs will seek a mixed-strategy Nash network. While a detailed treatment of the learning process to find the mixed-strategy Nash network is outside the scope of this paper (the interested reader is referred to~\cite{TL00} for more details), the RSs can apply existing algorithms such as fictitious play or evolutionary approaches in order to find the mixed-strategy Nash network~\cite{TL00}.

As a result, the myopic network formation phase of our proposed algorithm always converges.
\end{proof}

We note that, the case in which during the turn of an RS $i$ at any iteration $t$ there does \emph{not} exist any network $\hat{G}_t^i$ that has been visited more than $\hat{\varrho}$, i.e., $h^t(\hat{G}_t^i) \le \hat{\varrho}$ for any $\hat{G}_t^i$, adding the history constraints to the strategies of the RSs implies that, if the algorithm converges after $T$ iterations to a network $G_T$ and \emph{at the final} iteration $T$ (not at any iteration, only at the final one) there exist an RS $i \in \mathcal{V}$ which excluded a certain strategy $s_i$ which yields a better payoff for RS $i$ but leads to a network $G_T^\prime = G_T + s_i,\ G_T^\prime \neq G_T$ such that  $h^T(G_T^\prime )>\varrho$, then the final network $G_T$ is a \emph{history-induced Nash network} and not a Nash network in feasible strategies as per Definition~\ref{def:nash}. The difference is that, in a history-induced Nash network that is formed after $T$ iterations, no RS can, unilaterally, change its link given that its strategy set \emph{excludes} any strategy that yields a network $G_T^\prime$ such that $h^T(G_T^\prime ) > \varrho$ while in a  Nash network in feasible strategies, as per Definition~\ref{def:nash}, no RS has an incentive to unilaterally change its link given its entire feasible strategy set.  We should stress that, the use of history by an RS $i$ at an iteration $t < T$ does \emph{not} mean that the final outcome will necessarily be a history-induced Nash equilibrium. For instance, an RS $i$ can use history instead of using its feasible best response at an iteration $t < T$ and, then, at iterations $t+1,\ldots,T$, it will once again revert to using its best response strategies, if no need for history arise. In this case, the network can, eventually, still reach a Nash network in feasible strategies as per Definition~\ref{def:nash} (not history-induced) at iteration $T$. As a result, we have the following property:
\begin{lemma}
The final tree structure $G_T$ resulting from the convergence of the proposed algorithm after $T$ iterations is a Nash network in the space of feasible strategies $\hat{\mathcal{S}}_i,\ \forall i \in \mathcal{V}$ as per Definition~\ref{def:nash}, if, at iteration $T$, there does not exist any strategy $s_i \in \hat{\mathcal{S}}_i,\  $ for any RS $i$ such that $h_i(G_T +s_i) > \varrho$ and $u_{i}(G_T +s_i) > u_{i}(G_T)$. Otherwise, the final network is a history-induced Nash network. Alternatively, if, in $G_T$, the RSs use different links with different probabilities, then the network is a mixed-strategy Nash network.
\end{lemma}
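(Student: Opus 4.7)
The plan is to derive the lemma directly from the convergence result in Theorem~\ref{th:one}, by translating the stopping condition of the proposed algorithm into the appropriate equilibrium notion for each of the three cases listed in the statement. Concretely, I would first fix the outcome $G_T$ reached after $T$ iterations, and then examine \emph{why} the algorithm halted at iteration $T$: for each RS $i\in\mathcal{V}$, the reason $i$ did not deviate further must be either (a) no feasible strategy $s_i\in\hat{\mathcal{S}}_i$ strictly improves $u_i$, or (b) every strictly improving feasible strategy $s_i$ yields a graph $G_T+s_i$ with $h^T(G_T+s_i)>\varrho$. This dichotomy is exactly the case split in the lemma, so the proof is essentially a verification step.

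For the first part, I would assume the hypothesis: no RS $i$ has a feasible $s_i$ with both $h^T(G_T+s_i)>\varrho$ and $u_i(G_T+s_i)>u_i(G_T)$. Then case (b) above is vacuous, and case (a) must hold for every RS. That is, for every $i$ and every $s_i\in\hat{\mathcal{S}}_i$, $u_i(G_T+s_i)\le u_i(G_T)$, which is precisely the definition of a Nash network in the space of feasible strategies (Definition~\ref{def:nash}). For the second part (the ``otherwise'' direction), I would note that there exists at least one RS $i^\star$ with a strictly improving feasible deviation $s_{i^\star}$ that was blocked only by the history constraint $h^T(G_T+s_{i^\star})>\varrho$. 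The algorithm therefore converged on the restricted strategy space $\tilde{\mathcal{S}}_i = \{s_i\in\hat{\mathcal{S}}_i : h^T(G_T+s_i)\le\varrho\}$; by the same stopping argument, no RS has a profitable unilateral deviation within $\tilde{\mathcal{S}}_i$, which matches the informal notion of a history-induced Nash network introduced in the paragraph immediately preceding the lemma.

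For the mixed-strategy case, I would invoke the branch of the proof of Theorem~\ref{th:one} that is triggered whenever some visited network has history count exceeding the critical threshold $\hat{\varrho}$. In that branch, the RSs switch to a learning procedure (e.g., fictitious play) whose fixed points are mixed-strategy Nash equilibria of the underlying stage game on $\hat{\mathcal{S}}_i$; the existence of such equilibria follows from the standard Nash theorem cited via~\cite{BA00}. Hence if $G_T$ is a stationary outcome in which RSs assign positive probability to several outgoing links, it is by construction a mixed-strategy Nash network.

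The main obstacle I anticipate is conceptual rather than technical: the notion of a ``history-induced Nash network'' is introduced only informally in the text preceding the lemma, so the proof must make the restricted strategy set $\tilde{\mathcal{S}}_i$ explicit and justify that stability on $\tilde{\mathcal{S}}_i$ is the intended definition. A secondary subtlety is that best-response use of history at earlier iterations $t<T$ does not, by itself, produce a history-induced equilibrium at $T$; I would therefore emphasize that only the status of the \emph{final} iteration $T$ matters, exactly as the lemma's hypothesis is phrased, so that the three cases are mutually exhaustive given the convergence guarantee of Theorem~\ref{th:one}.
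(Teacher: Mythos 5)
Your proposal is correct and follows essentially the same route as the paper's own proof: both treat the lemma as a direct consequence of Theorem~\ref{th:one}, splitting on whether, at the final iteration $T$, some RS has a strictly improving feasible deviation that is blocked only by the history constraint (history-induced Nash network) or not (Nash network in feasible strategies as per Definition~\ref{def:nash}), and handling the mixed-strategy case via the $\hat{\varrho}$ branch in which the RSs resort to fictitious play or similar learning schemes. Your explicit remark that only the status of the final iteration matters, and that earlier uses of history do not by themselves force a history-induced outcome, mirrors the paper's own discussion preceding and within the proof.
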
\vspace{-0.2cm}
\begin{proof}
This lemma is a direct consequence of Theorem~\ref{th:one}. Upon convergence of the algorithm to a network $G_T$ after $T$ iterations, we distinguish two cases. If the final network $G_T$ is such that $h^T(G_T + s_i) > \varrho$  and $u_{i}(G_T +s_i) > u_{i}(G_T)$ where $s_i \in \hat{\mathcal{S}}$ is a feasible strategy of any RS $i \in \mathcal{V}$, then, this network is a history-induced Nash network since the only way that this RS can improve its utility is by revisiting a network that was already left more than $\varrho$ times in the past. Otherwise, since the myopic network formation phase of the proposed algorithm is based on the feasible best responses of the RSs at each iteration $t$ and since an RS that uses history at an iteration $t$, can, eventually, revert back to using its feasible best response strategy at iterations $t+1,\ldots,T$, then the final network $G_T$ is a Nash network in feasible strategies as per Definition~\ref{def:nash}.  (the convergence of a best response algorithm reaches a Nash equilibrium \cite{BA00}). In this Nash network, no RS can improve its utility by unilaterally deviating from its currently selected feasible strategy (with no use of history). Alternatively, in the case in which, during the turn of an RS $i$ at an iteration $t$ there exists a network $\hat{G}_t^i$ that has been visited more than $\hat{\varrho}$, i.e., $h^t(G_t^i) > \hat{\varrho}$, the final network $G_T$ will be a mixed-strategy Nash network in which the RSs use different links with different probabilities.
\end{proof}

Note that the value of $\hat{\varrho}$ can be set by the operator in a way to highlight a certain preference between the mixed-strategy case and the history induced case.  For example, if the operator prefers the mixed-strategy case then it can set $\hat{\varrho} \le \varrho$, otherwise, when the operator prefers the history induced Nash network, it will set $\hat{\varrho} > \varrho$. Whenever no Nash network in feasible strategies is found, this preference captures a tradeoff between stronger stability (mixed-strategy Nash network) or faster convergence (history induced Nash network)..

After the convergence of the network formation phase of the algorithm, the RSs are connected through a tree structure $G_T$ and the second phase of the algorithm begins. This phase represents the actual data transmission phase, whereby the multi-hop network operation occurs as the RSs transmit the data over the existing tree architecture $G_T$.  A summary of the proposed algorithm is given in Table~\ref{tab:algo}.

 \begin{table}[!t]
%\scriptsize
  \centering
  \caption{%\mycaption{%\vspace*{-1em}
    \vspace*{-0.7em}Proposed network formation algorithm.}\vspace*{-1em}
    \begin{tabular}{p{8cm}}
      \hline
      % after \\: \hline or \cline{col1-col2} \cline{col3-col4} ...
      \textbf{Initial State} \vspace*{.3em} \\
      \hspace*{1em}The starting network is a graph where the RSs are directly connected to the BS (star network).\vspace*{0.01cm}\\
\textbf{The proposed algorithm consists of two phases} \vspace*{.3em}\\
%\hspace*{1em}\emph{Phase 1 - Fair Prioritization:}   \vspace*{.1em}\\
%\hspace*{2em}Each RS is given a priority depending on different criteria.\vspace*{.1em}\\
%\hspace*{2em}One example priority is to prioritize the RSs based on their BER (a lower BER \vspace*{.1em}\\ \hspace*{2em}implies a higher priority).\vspace*{.1em}\\
\hspace*{1em}\emph{Phase~I - Myopic Network Formation:}   \vspace*{.1em}\\
\hspace*{2em}\textbf{repeat}\vspace*{.2em}\\
\hspace*{3em}In a random but sequential order, the RSs engage in a network\vspace*{.2em}\\
\hspace*{3em}formation game.\vspace*{.2em}\\
\hspace*{3em}a) In every iteration $t$ of Phase~I, each RS $i$ plays its best\vspace*{.2em}\\
\hspace*{3em}response.\vspace*{.2em}\\
\hspace*{4em}a.1) Whenever, during the turn of an RS $i$ at an iteration $t$\vspace*{.2em}\\
\hspace*{4em}there exists a network $\hat{G}_t^i$ that has been visited more than\vspace*{.2em}\\
\hspace*{4em}$\hat{\varrho}$, i.e., $h^t(G_t^i) > \hat{\varrho}$ the RSs will seek a mixed-strategy Nash\vspace*{.2em}\\
\hspace*{4em}network using well-known techniques such as fictitious\vspace*{.2em}\\
\hspace*{4em}play or evolutionary games~\cite{TL00}.\vspace*{.2em}\\
\hspace*{4em}a.2) Otherwise, each RS $i$ maximizes its utility by playing\vspace*{.2em}\\
\hspace*{4em}its feasible best response $s_i^{*} \in \hat{\mathcal{S}}_i \setminus  \mathcal{S}_{\mathcal{G}_t}$ (with $\mathcal{G}_t$ being the\vspace*{.2em}\\
\hspace*{4em}set of all graphs visited at the end of iterations $1$ till $t-1$).\vspace*{.2em}\\
\hspace*{3em}b) For a.2), the best response $s_i^{*}$ of each RS is a \emph{replace}\vspace*{.2em}\\
\hspace*{3em}operation through which an RS $i$ splits from its current parent\vspace*{.2em}\\
\hspace*{3em}RS and replaces it with a new RS that maximizes its utility,\vspace*{.2em}\\
\hspace*{3em}given that this new RS \emph{accepts} the formation of the link.\vspace*{.2em}\\
\hspace*{2em}\textbf{until} convergence to a final Nash tree $G_T$ after $T$ iterations.\vspace*{.2em}\\
\hspace*{1em}\emph{Phase~II - Multi-hop Transmission:}   \vspace*{.1em}\\
\hspace*{3em}During this phase, data transmission from the MSs occurs using\vspace*{.2em}\\
\hspace*{3em}the formed network tree structure $G_T$.\vspace*{.2em}\\
\textbf{For changing environments (e.g. due to mobility or the deployment of new MSs), multiple rounds of this algorithm are run \emph{periodically} every time period $\theta$, allowing the RSs to adapt the network topology.} \vspace*{.5em}\\
   \hline
    \end{tabular}\label{tab:algo}\vspace{-0.7cm}
\end{table}

Furthermore, as the RSs can engage in the myopic network formation phase prior to any MS deployment, we consider the following convention throughout the rest of this paper:
\begin{convention}\label{conv:t}
At the beginning of all time, once the operator deploys the network, the RSs engage in the network formation game by taking into account their utilities in terms of HELLO packets, prior to any mobility or presence of MSs.
\end{convention}
The main motivation behind Convention~\ref{conv:t} is that the RSs can form an initial tree structure which shall be used by any MSs that will be deployed in the network. If any adaptation to this structure is needed, periodic runs of the proposed algorithm can occur as discussed further in this section.

The proposed algorithm can be implemented in a distributed way within any next generation wireless multi-hop network, with a little reliance on the BS. For instance, the sole role of the BS in the proposed network formation algorithm is to inform the RSs of the graphs reached during past iterations, if needed, over a control channel. Due to the fact that the number of RSs within the area of a single BS is small when compared with the number of MSs, the signalling and overhead for this information exchange between the BS and the RSs is minimal. Beyond this, the algorithm relies on distributed decisions taken by the RSs. Within every iteration $t$, during its turn, each RS can engage in pairwise negotiations with the surrounding RSs in order to find its best response, among the set of feasible strategies and given the graphs that were reached in previous iterations.

Note that, although a fully centralized approach can also be used for implementing the proposed algorithm, the need for a distributed solution is desirable as it has several advantages. First, although the number of RSs compared to that of the MSs is generally small, the proposed approach can also apply to the case where the RSs are replaced with relay nodes, whose number can be significantly large hence motivating a distributed approach. Second, a distributed approach can reduce the communication overhead at the BS, notably when the BS controls a large area in which the RSs are deployed in order to alleviate the communication overhead at the BS by communicating (instead of the BS) with some of the MSs. Also, in this case,  a centralized approach can require the BS to communicate with all of its RSs and update the network whenever needed, hence, increasing the signalling in the network and the computational load on the BS. Further, a distributed network formation game is more robust to increased delays at the BS (e.g., due to traffic received from non-MS sources such as content providers), failures, as well as to malicious attacks. This is due to the fact that, unlike a centralized approach, the distributed network formation game does not rely on a single controller such as the BS which, if compromised (due to malicious attacks or failures), can lead to a failure at the level of the entire network. Finally, although the current paper mainly deals with networks having a standard infrastructure (e.g., WiMAX or LTE-Advanced), the approach can equally apply in an ad hoc network where the relay stations are, in fact, relay nodes and the central base station is a common receiver for these nodes. In such a case, it is desirable that the nodes take their own decisions on how and where to route and transmit their traffic. Such distributed decision making is also of interest in an infrastructure-based network, whenever the RSs are mobile, or when the number and identity of the RSs can vary over time. In such a case, it is difficult (and undesirable) for a centralized entity to keep track of the variations in the network. For these reasons, a distributed approach for network formation is well-motivated.

The worst case complexity for implementing Step (a) in Phase~II of the algorithm in Table~\ref{tab:algo}, i.e., selecting the feasible best response (finding a suited partner) for any RS $i$ is $O(M)$ where $M$ is the total number of RSs. In practice, the complexity is much smaller as the RSs do not negotiate with the RSs that are connected to them, nor with the RSs that can lead to a graph visited at previous iterations. We stress that the complexity to find the best response in the proposed game is comparable with some of the most popular game theoretic approaches that are used in the literature when tackling problems such as power control or resource allocation (see \cite{ZH00} for a thorough overview on such approaches) in which finding the best response can yield a non-negligible and sometimes exponential complexity.  In order to evaluate its utility while searching for the best response, each RS can easily acquire the BER and an estimate of the delay that each neighbor can provide. As a result, each RS $i$ can take an individual decision to select the link $s_i^{*}$ that can maximize its utility. The signaling required for gathering this information can be minimal as each RS can measure its current channel towards the BS as well as the flowing traffic and feed this information back to any RS that requests it during the pairwise negotiations. In dynamically changing environments, following the formation of the initial tree structure as per Convention~\ref{conv:t}, the network formation process is repeated periodically every $\theta$ allowing the RSs to take autonomous decisions to update the topology adapting it to any environmental changes that occurred during $\theta$ such as the deployment of MSs, mobility of the RSs and/or MSs, among others. In fact, engaging in the network formation game periodically rather than continuously reduces the signalling in the network, while allowing the topology to adapt itself to environmental changes.  As the period $\theta$ is chosen to be smaller, the network formation game is played more often, allowing a better adaptation to networks with rapidly changing environments at the expense of extra signalling and overhead.  Note that, when the RSs are mobile, and/or when new MSs are entering and leaving the network, the MSs can also, periodically, change their serving RS, to adapt to this change in the network.

\section{Simulation Results and Analysis}\label{sec:sim}
For simulations, we consider a square area of $3$~km $\times$ $3$~km with the BS at the center. We deploy the RSs and the MSs within this area. The transmit power is set to $50$~mW  for all RSs and MSs, the noise level is $-100$~dBm, and the bandwidth per RS is set to $W=100$~kHz. For path loss, we set the propagation loss to $\mu=3$. We consider a traffic of $64$~kbps, divided into packets of length $B=256$ bits with an arrival rate of $250$ packets/s. For the HELLO packets, we set $\eta_0=1$~packet/s with the same packet length of $B=256$ bits. Unless stated otherwise, we assume that all the RSs and MSs utilize the same tradeoff parameters and its value is set to $\beta_i=\beta=0.7$ (for all RS and MS $i$) to imply services that are slightly delay tolerant. Further, the parameter $\epsilon$ is selected to be equal to $1\%$ of any RS's current utility, i.e., an RS accepts the formation of a link if its utility does not decrease by more than $1\%$ of its current value. Finally, we set $\varrho=1$ and $\hat{\varrho} > 1$.

In Fig.~\ref{fig:snapshot}, we randomly deploy $M=10$ RSs within the area of the BS. The network starts with an initial star topology with all the RSs connected directly to the BS. Prior to the deployment of MSs (in the presence of HELLO packets only), the RSs engage in the proposed network formation algorithm and converge to the final Nash network structure shown by the solid lines in Fig.~\ref{fig:snapshot}. Clearly, the figure shows that through their distributed decisions the RSs select their preferred nearby partners, forming the multi-hop tree structure. Furthermore, we deploy $30$ randomly located MSs in the area, and show how the RSs self-organize and adapt the network's topology to the incoming traffic through the dashed lines in Fig.~\ref{fig:snapshot}. For instance, RS $9$ improves its utility from $370.6$ to $391.2$ by disconnecting from RS $8$ and connecting to RS $6$ instead. This improvement stems from the fact that, although  connecting to RS $8$ provides a better BER for RS $9$, in the presence of the MSs, choosing a shorter path, i.e., less hops through RS $6$, the delay perceived by the traffic of RS $9$ is reduced, hence, improving the overall utility. Moreover, due to the deployment of traffic and the deviation of RS $9$, RS $8$ decides to disconnect from RS $6$ and connect directly to the BS, hence, avoiding the extra delay that exists at RS $6$ when MSs are deployed. Further, in order to send its HELLO packet, RS $7$ finds it beneficial to replace its current link with the congested RS $1$ with a direct link to the BS. In brief, Fig.~\ref{fig:snapshot} summarizes the operation of the proposed adaptive network formation algorithm with and without the presence of external traffic from MSs.
\begin{figure}[!t]
\begin{center}
\includegraphics[width=80mm]{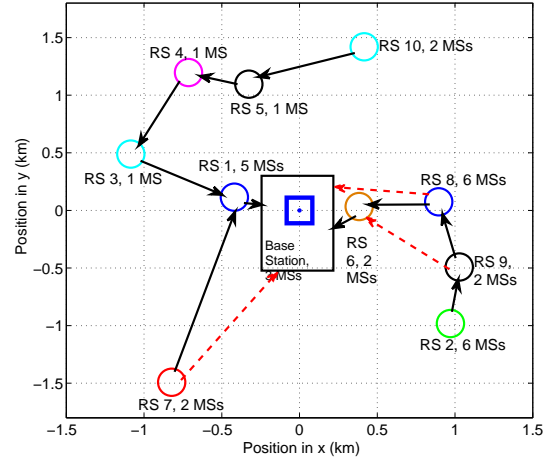}
\end{center}\vspace{-0.6cm}
\caption {Snapshot of a tree topology formed using the proposed network algorithm with $M=10$ RSs before (solid line) and after (dashed line) the random deployment of $30$ MSs.} \label{fig:snapshot}\vspace{-0.6cm}
\end{figure}
\begin{figure}[!t]
\begin{center}
\includegraphics[width=80mm]{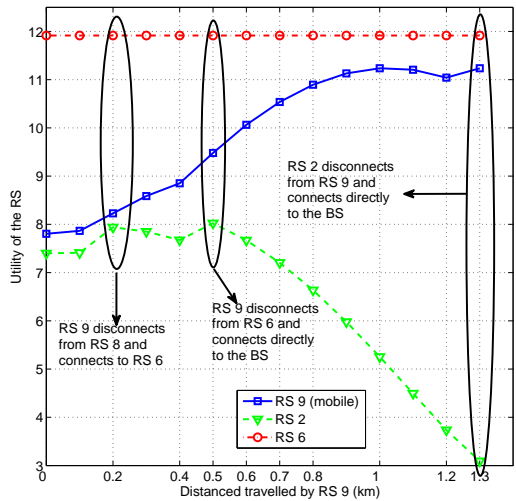}
\end{center}\vspace{-0.6cm}
\caption {Adaptation of the network's tree structure to mobility of the RSs shown through the changes in the utility of RS $9$ of Fig.~\ref{fig:snapshot} as it moves on the x-axis in the negative direction prior to any MS presence.} \label{fig:split}\vspace{-0.3cm}
\end{figure}

In Fig.~\ref{fig:split}, we assess the effect of mobility on the network structure. For this purpose, we consider the network of Fig.~\ref{fig:snapshot} \emph{prior to the deployment of the MSs} and we consider that RS~$9$ is moving horizontally in the direction of the negative x-axis while the other RSs remain static. The variation in the utilities of the main concerned RSs during the mobility of RS $9$ are shown in Fig.~\ref{fig:split}. Once RS~$9$ starts its movement, its utility increases since its distance to its serving RS, RS~$8$, decreases. Similarly, the utility of RS~$2$, served by RS~$9$ also increases. As RS~$9$ moves around $0.2$~km, it finds it beneficial to replace its current link with RS~$8$ and connect to RS~$6$ instead. In this context, RS~$6$ would accept the incoming connection from RS~$9$ since this acceptance does not affect its utility negatively as shown in Fig.~\ref{fig:split} at $0.2$~km. As RS~$9$ pursues its mobility, its utility improves as it gets closer to RS~$6$ while the utility of RS~$2$ decreases since RS~$9$ is distancing itself from it. After moving for a distance of $0.5$~km, RS~$9$ becomes quite close to the BS, and, thus, it maximizes its utility by disconnecting from RS~$6$ and connecting directly to the BS. This action taken by RS~$9$ at $0.5$~km also improves the utility of RS~$2$. Meanwhile, RS~$9$ continues its movement and its utility as well as that of RS~$2$ start to drop as RS~$9$ distances itself from the BS. As soon as RS~$9$ moves for a total of $1.3$~km, RS~$2$ decides to disconnect from RS~$9$ and connect directly to the BS since the direct transmission can provide a better utility at this point. In a nutshell, by inspecting the results of Fig.~\ref{fig:split}, we clearly illustrate how the RSs can take distributed decisions that allow them to self-organize and adapt the topology to mobility.

\begin{figure}[!t]
\begin{center}
\includegraphics[width=80mm]{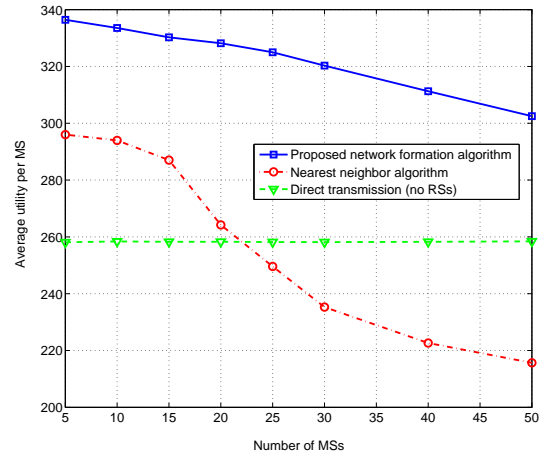}
\end{center}\vspace{-0.6cm}
\caption {Performance assessment of the proposed network formation algorithm, in terms of average utility per MS, for a network having $M=10$~RSs as the number of MSs varies.} \label{fig:perf}\vspace{-0.3cm}
\end{figure}

Further, we have run a variety of statistical simulation results for different network sizes, with each one averaged over around $50,000$ iterations with random positions for the MSs and RSs. Based on these simulations, we first note that, for networks having up to $M=10$~RSs (for any number of MSs), we have encountered \emph{only} Nash networks in feasible strategies as per Definition~\ref{def:nash} and \emph{no} history-induced Nash networks. For networks with $M>10$~RSs, only about $2\%$ of the runs ended up with a history-induced Nash network. Hence, based on these simulations, we can see that, in practical settings, the number of history-induced Nash networks is very small, and, thus, the RSs may reach a Nash network without utilizing their history functions in the decision process.

Subsequently, Fig.~\ref{fig:perf} shows the average achieved utility per MS for a network with $M=10$~RSs as the number of MSs in the network increases. The performance of the proposed network formation algorithm is compared against the direct transmission performance, i.e.,  the case where no RSs exist in the network, as well as a nearest neighbor algorithm whereby each node selects the closest partner to connect to. Note that these schemes are selected for comparison purposes since, to the best of our knowledge, this paper is the first in the literature that deals with distributed tree formation in next-generation networks\footnote{The work in \cite{RS00} studies the tree formation in IEEE 802.16j, however, \cite{RS00} focuses on the messages needed to control the RSs and no algorithm (or QoS metric/utility) for forming the network is actually provided.}. In this figure, we can see that, as the number of MSs in the network increases, the performance of both the proposed algorithm as well as that of the nearest neighbor algorithm decrease. This result is due to the fact that, as more MSs are present in the network, the delay from multi-hop transmission due to the additional traffic increases, and, thus, the average payoff per MS decreases. In contrast, in the case of no RSs, the performance is unaffected by the increase in the number of MSs since no delay exists in the network. We also note that, due to the increased traffic, the performance of the nearest neighbor algorithm drops below that of the direct transmission at around $20$~MSs. Further, Fig.~\ref{fig:perf} shows that, at all network sizes, the proposed network formation algorithm presents a significant advantage over both the nearest neighbor algorithm and the direct transmission case. This performance advantage is of at least $17.1\%$ compared to the direct transmission case (for $50$~MSs) and it reaches up to $40.3\%$ improvement relative to the nearest neighbor algorithm at $50$~MSs.

\begin{figure}[!t]
\begin{center}
\includegraphics[width=80mm]{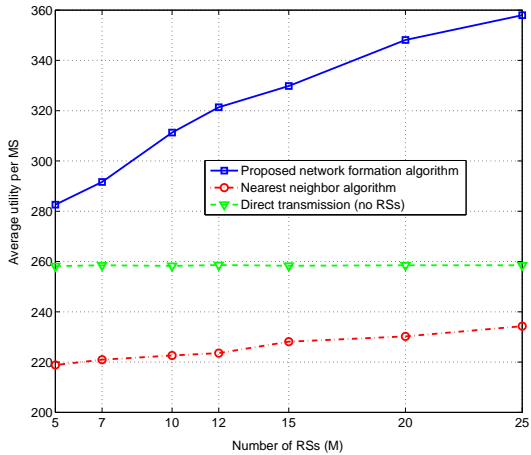}
\end{center}\vspace{-0.6cm}
\caption {Performance assessment of the proposed network formation algorithm, in terms of average utility per MS, for a network having $40$~MSs as the number of RSs $M$ varies.} \label{fig:perfrs}\vspace{-0.3cm}
\end{figure}

The performance of the proposed network formation algorithm is further assessed in Fig.~\ref{fig:perfrs}, where we show the average utility per MS as the number of RSs $M$ in the network varies, for a network with $40$~MSs. Fig.~\ref{fig:perfrs} shows that, as $M$ increases, the performance of the proposed algorithm as well as that of the nearest neighbor algorithm increase. This is due to the fact that, as the number of RSs increase, the possibilities of benefiting from cooperative transmission gains increase, and, thus, the average utility per MS increase. In contrast, for the direct transmission scheme, the performance is constant as $M$ varies, since this scheme does not depend on the number of RSs. Fig.~\ref{fig:perfrs} demonstrates that, at all network sizes, the proposed network formation algorithm presents a significant performance gain reaching, respectively, up to $52.8\%$ and $38.5\%$ relative to the nearest neighbor algorithm and the direct transmission case.

 %Finally, this figure also highlights that, due to the delay cost for transmission over multi-hop networks, a significant number of RSs must be deployed within the coverage area of a single BS in order to benefit from performance gains.

In Fig.~\ref{fig:hops}, we show the average and the average maximum number of hops in the resulting network structure as the number of RSs $M$ in the network increases for a network with $40$~MSs
 (results are averaged over random positions of MSs and RSs). The number of hops shown in this figure represents the hops connecting RSs or the RSs to the BS, without accounting for the MS-RS hop. Fig.~\ref{fig:hops} shows that, as the number of RSs $M$ increases, both the average and the  average maximum number of hops in the tree structure increase. The average and the average maximum number of hops vary, respectively, from $1.85$ and $2.5$ at $M=5$~RSs, up to around $3$ and $5$ at $M=25$. Consequently, as per  Fig.~\ref{fig:hops}, due to the delay cost for multi-hop transmission, both the average and average maximum number of hops increase very slowly with the network size $M$. For instance, one can notice that, up to $20$ additional RSs are needed in order to increase the average number of hops of around $1$ hops and the average maximum number of hops of only around $2$~hops.

\begin{figure}[!t]
\begin{center}
\includegraphics[width=80mm]{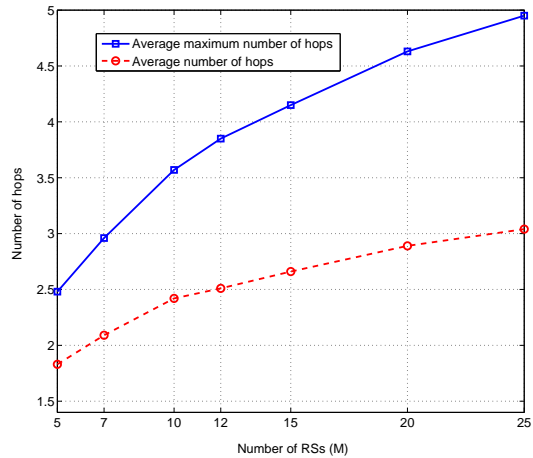}
\end{center}\vspace{-0.6cm}
\caption {Average and average maximum number of hops in the final tree structure for a network with $40$ MSs vs. number of RSs $M$ in the network.} \label{fig:hops}\vspace{-0.3cm}
\end{figure}

Fig.~\ref{fig:conv} shows the average and the maximum number of iterations needed till convergence of the algorithm to the initial network structure prior to the deployment of any MSs, as the size of the network $M$ increases. This figure shows that, as the number of RSs increase, the total number of iterations required for the convergence of the algorithm increases. This result is due to the fact that, as $M$ increases, the cooperation options for every RS increase, and, thus, more actions are required prior to convergence. Fig.~\ref{fig:conv} shows that the average and the maximum number of iterations vary, respectively, from $1.12$ and $2$ at $M=5$~RSs up to $2.9$ and $8$ at $M=25$~RSs. Hence, this result demonstrates that, in average, the speed of convergence of the proposed algorithm is quite reasonable even for relatively large networks. Similar results can be seen for the convergence of the algorithm when MSs are deployed or when the RSs are moving.

\begin{figure}[!t]
\begin{center}
\includegraphics[width=80mm]{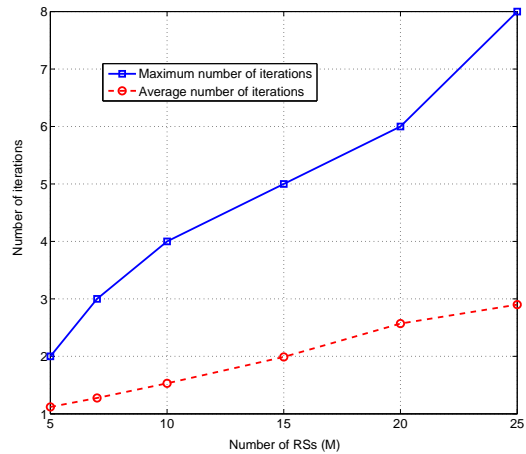}
\end{center}\vspace{-0.6cm}
\caption {Average and maximum number of iterations till convergence vs. number of RSs $M$ in the network.} \label{fig:conv}\vspace{-0.3cm}
\end{figure}

Fig.~\ref{fig:nash} shows the distribution of the total number of Nash networks, over about $50,000$ iterations (network settings), for networks with $40$~MSs for the two cases of $M=5$~RSs and $M=7$~RSs. Each iteration represents different locations for the MSs and RSs. This figure is generated by finding all possible network trees (spanning trees) and counting the number of Nash networks in each case. Fig.~\ref{fig:nash} shows that for all settings, at least one Nash network exists. Further, for $M=5$~RSs, we can see that the number of Nash networks is concentrated in the interval $[1,10]$. In fact, for $M=5$~RSs about $83\%$ of the cases admit between $1$ and $10$ Nash networks with the majority of the network settings having $4$ Nash networks (about $25\%$ of the total cases). However, as the number of RSs increases of $2$, i.e., for $M=7$~RSs, Fig.~\ref{fig:nash} shows that more Nash networks exist and their distribution becomes more balanced over the different intervals. At $M=7$~RSs, about $23\%$ of the cases admit between $11$ and $20$ Nash networks and about $19\%$ admit between $31$ and $40$ Nash networks. For $M=7$~RSs, we can see that about $9\%$ admit more than $100$ Nash networks with the maximum being one case having $1293$ Nash networks. Although, at first glance, this number can look large, it must be noted that $M=5$~RSs and $M=7$~RSs can form, respectively, a total of $125$ and $16,807$ possible trees (this number is given by Cayley's formula which states that a graph with $n$ vertices admits $n^{n-2}$ spanning trees \cite{GRA00}). Thus, relative to the total number of possible network trees, the number of Nash networks is small. Note that, for large networks, finding all possible Nash equilibria and their distribution is computationally intractable since it requires finding all possible networks which grow exponentially with $M$. However, Fig.~\ref{fig:nash} gives a good insight on how this number will vary as the network size grows.

\begin{figure}[!t]
\begin{center}
\includegraphics[width=80mm]{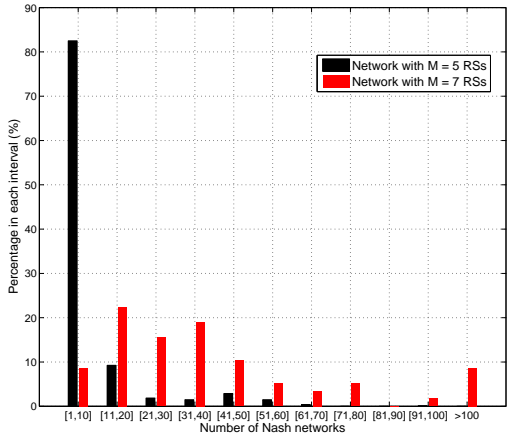}
\end{center}\vspace{-0.6cm}
\caption {Distribution of the number of Nash networks over about
 $50,000$ different network settings (RS and MS locations) for networks with $M=5$~RSs and $M=7$~RSs (with $40$~MSs).} \label{fig:nash}\vspace{-0.3cm}
\end{figure}

In Fig.~\ref{fig:nasheff}, we assess the efficiency of the Nash networks in the proposed model by showing the average utility per MS achieved by the proposed network formation algorithm, a centralized approach that finds the optimal (maximizing the average utility per RS) network tree by exhaustive search, and the Nash network having the least efficiency, i.e., the smallest average utility per MS (worst case Nash network) for a network with $M=5$~RSs as the number of MSs varies. In this figure, we note that the proposed network formation algorithm achieves, at all network sizes, a performance that is comparable to the optimal solution. The performance of the proposed network formation algorithm gets closer to the optimal solution as the network becomes congested. For instance, Fig.~\ref{fig:nasheff} shows that the average utility per MS resulting from network formation is only between $5\%$ (at $10$~MSs) and $2.8\%$ (at $M=50$~MSs) less than the optimal solution. Moreover, this figure provides an insight on the efficiency of the Nash networks resulting from the proposed model through the price of anarchy, which is defined as the ratio between the optimal case and the worst case Nash equilibrium~\cite{ALG00}. Fig.~\ref{fig:nasheff} shows that the price of anarchy is, on the average, about $1.09$. This result shows that the Nash networks resulting from the proposed model are, in general, reasonably efficient as the worst case Nash network has a performance of not less than $9\%$ below the optimal solution.
\begin{figure}[!t]
\begin{center}
\includegraphics[width=80mm]{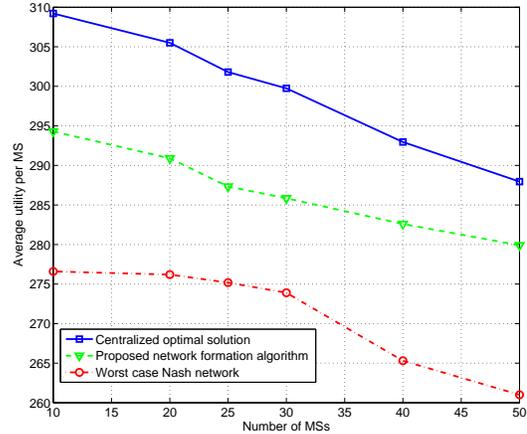}
\end{center}\vspace{-0.6cm}
\caption{Average utility per MS achieved by the proposed algorithm, a centralized approach that finds the optimal network tree by exhaustive search, and the Nash network having the least efficiency, i.e., the smallest average utility per MS (worst case Nash network) for a network with $M=5$~RSs as the number of MSs varies.} \label{fig:nasheff}\vspace{-0.3cm}
\end{figure}

In Fig.~\ref{fig:hopsbeta}, we show the average and the average maximum number of hops for a network with $M=10$~RSs and $40$~MSs as the tradeoff parameter $\beta$ varies (results are averaged over random positions of MSs and RSs). Fig.~\ref{fig:hopsbeta} shows that, as the tradeoff parameter increases, both the average and the average maximum number of hops in the tree structure increase. For instance, the average and the average maximum number of hops vary, respectively, from $1.14$ and $1.75$ at $\beta=0.1$, up to around $2.8$ and around $4$ at $\beta=0.9$. The increase in the number of hops with $\beta$ is due to the fact that, as the network becomes more delay tolerant (larger $\beta$) the possibilities for using multi-hop transmission among the RSs increases. In contrast, as the network becomes more delay sensitive, i.e., for small $\beta$, the RSs tend to self-organize into a tree structure with very small number of hops. For instance, at $\beta=0.1$, the average number of hops is quite close to $1$, which implies that, for highly delay sensitive services, direct transmission from the RSs to the BS, i.e., the star topology, provides, on the average, the best architecture for communication.
\begin{figure}[!t]
\begin{center}
\includegraphics[width=80mm]{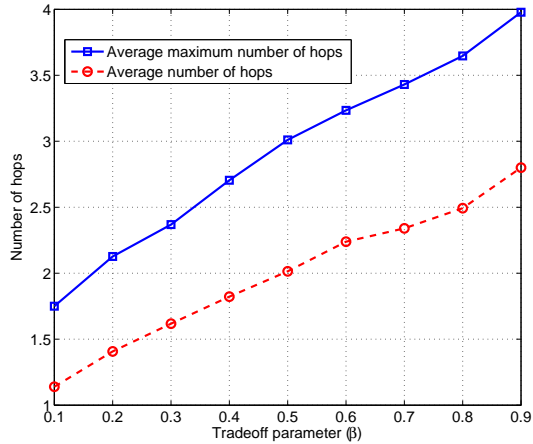}
\end{center}\vspace{-0.6cm}
\caption {Average and average maximum number of hops in the final tree structure for a network with $10$~RSs and $40$ MSs as the tradeoff parameter $\beta$ varies.} \label{fig:hopsbeta}\vspace{-0.3cm}
\end{figure}
%in order to assess the effect of the presence of external MSs .
 % shows how our algorithm operates once the network is deployed as well as how the RSs can self-organize to cope with incoming data traffic.

In Fig.~\ref{fig:freq},  we show, over a period of $5$ minutes, the average total number of actions taken by all RSs for various velocities of the RSs in a wireless network with $40$~MSs and different number of RSs. The proposed network formation algorithm is repeated by the RSs, periodically, every $\theta=30$~seconds, in order to provide self-adaptation to mobility. As the speed of the RSs increases, the average total number of actions per minute increases for both $M=10$~RSs and $M=20$~RSs. This result corroborates the fact that, as more mobility occurs in the network, the chances of changes in the network structure increase, and, thus, the RSs take more actions. Also, Fig.~\ref{fig:freq} shows that the case of $M=20$~RSs yields an average total number of actions significantly higher than the case of $M=10$~RSs. The reason of this difference is that, as the number of RSs $M$ increases, the possibility of finding new partners when the RSs move increases significantly, hence yielding an increase in the topology variation as reflected by the average total number of actions. In this regard, for $M=20$~RSs, the average total number of actions per minute varies from around $5.7$ at $9$~km/h to around $41$ at $72$~km/h while for $M=10$~RSs, this variation is from  $1.3$ at $9$~km/h to around $12$ at $72$~km/h. In summary, Fig.~\ref{fig:freq} demonstrates how, through periodic runs of the proposed network formation algorithm, the RSs can adapt the topology through appropriate decisions.

\begin{figure}[!t]
\begin{center}
\includegraphics[width=80mm]{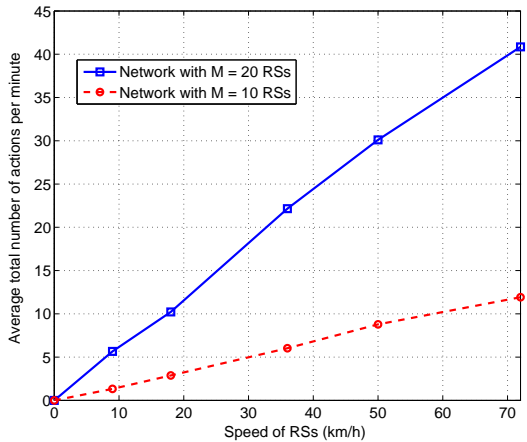}
\end{center}\vspace{-0.65cm}
\caption {Average total number of actions (taken by all RSs) per minute for different RS speeds in networks with different sizes with $40$~MSs.}\vspace{-0.3cm}
\label{fig:freq}
\end{figure}

Fig.~\ref{fig:mobconv} shows how the tree structure in a network with $M=10$~RSs and $40$~MSs, evolves and self-adapts over time when all the MSs are moving at a constant speed of $100$~km/h for a period of $5$ minutes. The proposed network formation algorithm is repeated by the RSs, periodically, every $\theta=30$~seconds, in order to provide self-adaptation to mobility. Fig.~\ref{fig:mobconv} shows that, after $10$ actions taken by the RSs, the network starts with a tree structure with an average number of $2.2$ hops in the tree at time $t=0$. As time evolves, the mobiles are moving and, thus, the RSs engage in the proposed network formation algorithm, to adapt the tree structure to the MSs' mobility through adequate actions. For example, after $2.5$~minutes have elapsed, the tree structure has an average number of $2.6$ hops (after having $1.83$ hops at $2$ minutes), due to the occurrence of a total of $4$ actions by the RSs. At some points such as at $t=4.5$~minutes or $t=5$~minutes, mobility does not yield any changes in the tree structure as no actions are taken by the RSs. Finally, once all the $5$ minutes have passed, the network tree structure is finally made up of an average of $2.5$ hops after a total of $24$ actions played by the RSs during the whole $5$~minutes duration.\vspace{-0.3cm}

\begin{figure}[!t]
\begin{center}
\includegraphics[width=80mm]{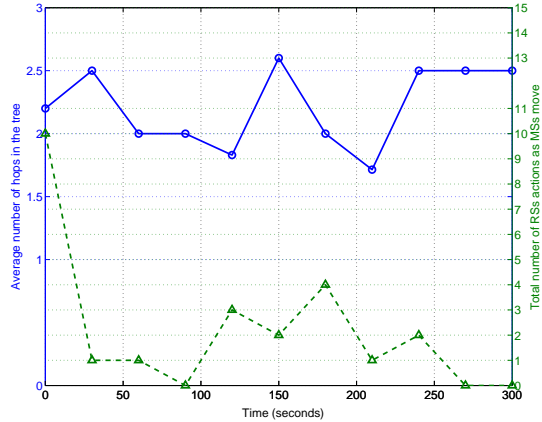}
\end{center}\vspace{-0.6cm}
\caption {Evolution of the network tree structure over time as the MSs are moving with a speed of $100$~km/h over a period of $5$~minutes for a network with $40$~MSs and $M=10$~RSs.} \label{fig:mobconv}\vspace{-0.3cm}
\end{figure}

\section{Conclusions}\label{sec:conc}\vspace{-0.1cm}
In this paper, we have introduced a novel approach for forming the tree architecture that governs the uplink network structure of next generation wireless systems such as LTE-Advanced or WiMAX 802.16j. For this purpose, we formulated a network formation game among the RSs and we introduced a cross-layer utility function that takes into account the gains from cooperative transmission in terms of improved effective throughput as well as the delay costs incurred by multi-hop transmission. To form the tree structure, we devised a distributed myopic algorithm. Using the proposed network formation algorithm, each RS can take an individual decision to optimize its utility by selecting a suited next-hop partner, given the approval of this partner. We showed the convergence of the algorithm to a Nash network structure and we discussed how, through periodic runs of the algorithm, the RSs can adapt this structure to environmental changes such as mobility or incoming traffic. Simulation results demonstrated that the algorithm presents significant gains in terms of average achieved mobile station utility which is at least $21.5\%$ better than the case with no RSs and reaches up to $45.6\%$ improvement compared to a nearest neighbor algorithm.  The results also show that the average number of hops in the tree does not exceed $3$ even for a network with up to $25$ RSs.

\def\baselinestretch{0.88}
\bibliographystyle{IEEEtran}
\bibliography{references}

% Generated by IEEEtran.bst, version: 1.13 (2008/09/30)
\begin{thebibliography}{10}
\providecommand{\url}[1]{#1}
\csname url@samestyle\endcsname
\providecommand{\newblock}{\relax}
\providecommand{\bibinfo}[2]{#2}
\providecommand{\BIBentrySTDinterwordspacing}{\spaceskip=0pt\relax}
\providecommand{\BIBentryALTinterwordstretchfactor}{4}
\providecommand{\BIBentryALTinterwordspacing}{\spaceskip=\fontdimen2\font plus
\BIBentryALTinterwordstretchfactor\fontdimen3\font minus
  \fontdimen4\font\relax}
\providecommand{\BIBforeignlanguage}[2]{{%
\expandafter\ifx\csname l@#1\endcsname\relax
\typeout{** WARNING: IEEEtran.bst: No hyphenation pattern has been}%
\typeout{** loaded for the language `#1'. Using the pattern for}%
\typeout{** the default language instead.}%
\else
\language=\csname l@#1\endcsname
\fi
#2}}
\providecommand{\BIBdecl}{\relax}
\BIBdecl

\bibitem{LT00}
J.~Laneman, D.~Tse, and G.~Wornell, ``Cooperative diversity in wireless
  networks: {Efficient} protocols and outage behavior,'' \emph{{IEEE} Trans.
  Inform. Theory}, vol.~50, pp. 3062--3080, Dec. 2004.

\bibitem{SL00}
A.~Sadek, W.~Su, and K.~R. Liu, ``Multinode cooperative communications in
  wireless networks,'' \emph{{IEEE} Trans. Signal Processing}, vol.~55, pp.
  341--355, Jan. 2007.

\bibitem{HB00}
J.~Boyer, D.~Falconer, and H.~Yanikomeroglu, ``Multihop diversity in wireless
  relaying channels,'' \emph{{IEEE} Trans. Commun.}, vol.~52, pp. 1820--1830,
  Oct. 2004.

\bibitem{ZH00}
Z.~Han and K.~J. Liu, \emph{Resource allocation for wireless networks: basics,
  techniques, and applications}.\hskip 1em plus 0.5em minus 0.4em\relax
  Cambridge University Press, 2008.

\bibitem{LTE}
{3GPP TR 36. 814 Technical Specification Group Radio Access Network}, ``Further
  advancements for {E-UTRA}, physical layer aspects,'' Tech. Rep.

\bibitem{ST00}
{The Relay Task Group of IEEE 802.16}, ``The p802.16j baseline document for
  draft standard for local and metropolitan area networks,'' 802.16j-06/026r4,
  Tech. Rep., Jun. 2007.

\bibitem{LS00}
B.~Lin, P.~Ho, L.~Xie, and X.~Shen, ``Optimal relay station placement in {IEEE}
  802.16j networks,'' in \emph{Proc. International Conference on Communications
  and Mobile Computing}, Hawaii, USA, Aug. 2007.

\bibitem{LTE00}
R.~Schoenen, R.~Halfmann, and B.~H. Walke, ``{MAC} performance of a {3GPP-LTE}
  multihop cellular network,'' in \emph{Proc.\ Int.\ Conf.\ on Communications},
  Beijing, China, May 2008, pp. 4819--4824.

\bibitem{LTE01}
S.~W. Peters, A.~Panah, K.~Truong, and R.~W. Heath, ``Relay architectures for
  {3GPP LTE-Advanced},'' \emph{EURASIP Journal on Wireless Communication and
  Networking}, vol. 2009, May 2009.

\bibitem{LS01}
B.~Lin, P.~Ho, L.~Xie, and X.~Shen, ``Relay station placement in {IEEE 802.16j}
  dual-relay {MMR} networks,'' in \emph{Proc.\ Int.\ Conf.\ on Communications},
  Beijing, China, May 2008.

\bibitem{OTHER01}
D.~Niyato, E.~Hossain, D.~I. Kim, and Z.~Han, ``Relay-centric radio resource
  management and network planning in ieee 802.16j mobile multihop relay
  networks,'' \emph{{IEEE} Trans. Wireless Commun.}, To appear 2009.

\bibitem{LTE02}
T.~Wirth, V.~Venkatkumar, T.~Haustein, E.~Schulz, and R.~Halfmann,
  ``{LTE-Advanced Relaying for Outdoor Range Extension},'' in \emph{Proc.
  Vehicular Technology Conference (VTC-Fall)}, Anchorage, USA, Sep. 2009.

\bibitem{LTE03}
O.~Teyeb, V.~V. Phan, B.~Raaf, and S.~Redana3, ``Dynamic relaying in {3GPP
  LTE-Advanced} networks,'' \emph{EURASIP Journal on Wireless Communication and
  Networking}, vol. 2009, Jul. 2009.

\bibitem{LTE04}
O.~Teyeb, V.~V. Phan, B.~Raaf, and S.~Redana, ``Handover framework for relay
  enhanced {LTE} networks,'' in \emph{Proc.\ Int.\ Conf.\ on Communications},
  Dresden, Germany, Jun. 2009.

\bibitem{OTHER00}
E.~Visotsky, J.~Bae, R.~Peterson, R.~Berryl, and M.~L. Honig, ``On the uplink
  capacity of an 802.16j system,'' in \emph{Proc.\ IEEE Wireless Communications
  and Networking Conf.}, Las Vegas, USA, Apr. 2008.

\bibitem{OTHER02}
Y.~Yu, S.~Murphy, and L.~Murphy, ``Planning base station and relay station
  locations in {IEEE} 802.16j multi-hop relay networks,'' in \emph{Proc. IEEE
  Consumer Communications and Networking Conference}, Las Vegas, USA, Jan.
  2008.

\bibitem{RS00}
H.~Lee, H.~Park, Y.~Choi, Y.~Chung, and S.~Rhee, ``Link adaptive multi-hop path
  management for {IEEE} 802.16j,'' in \emph{IEEE C802/16j-07/1053}, Jan. 2007.

\bibitem{WS01}
W.~Saad, Z.~Han, M.~Debbah, and A.~Hj{\o}rungnes, ``Network formation games for
  distributed uplink tree construction in {IEEE} 802.16j networks,'' in
  \emph{Proc.\ IEEE Global Communication Conference}, New Orleans, LA, USA,
  Dec. 2008.

\bibitem{WS02}
W.~Saad, Z.~Han, M.~Debbah, A.~Hj{\o}rungnes, and T.~Ba\c{s}ar, ``A game-based
  self-organizing uplink tree for {VoIP} services in {IEEE} 802.16j networks,''
  in \emph{Proc.\ Int.\ Conf.\ on Communications}, Dresden, Germany, Jun. 2009.

\bibitem{BE00}
D.~Bertsekas and R.~Gallager, \emph{Data networks}.\hskip 1em plus 0.5em minus
  0.4em\relax New Jersey, USA: Prentice Hall, Mar. 1992.

\bibitem{DM00}
G.~Demange and M.~Wooders, \emph{Group formation in economics: networks, clubs
  and coalitions}.\hskip 1em plus 0.5em minus 0.4em\relax Cambridge, UK:
  Cambridge University Press, Mar. 2005.

\bibitem{TJ00}
R.~Johari, S.~Mannor, and J.~Tsitsiklis, ``A contract based model for directed
  network formation,'' \emph{Games and Economic Behavior}, vol.~56, pp.
  201--224, Jul. 2006.

\bibitem{JD00}
J.~Derks, J.~Kuipers, M.~Tennekes, and F.~Thuijsman, ``Local dynamics in
  network formation,'' \emph{Maastricht University, Department of Mathematics,
  http://www.math.unimaas.nl/PERSONAL/jeand/downlds/derdyn.pdf}, Maastricht,
  the Netherlands, Dec. 2007.

\bibitem{TUT00}
W.~Saad, Z.~Han, M.~Debbah, A.~Hj{\o}rungnes, and T.~Ba\c{s}ar, ``Coalition
  game theory for communication networks: {A} tutorial,'' \emph{IEEE Signal
  Processing Mag., Special issue on Game Theory in Signal Processing and
  Communications}, vol.~26, no.~5, pp. 77--97, Sep. 2009.

\bibitem{KL00}
L.~Kleinrock, ``Power and deterministic rules of thumb for probabilistic
  problems in computer communications,'' in \emph{Proc.\ Int.\ Conf.\ on
  Communications}, Boston, USA, Jun. 1979.

\bibitem{PW03}
C.~Douligeris and R.~Mazumdar, ``A game theoretic perspective to flow control
  in telecommunication networks,'' \emph{J. Franklin Inst.}, vol. 329, pp.
  3823--402, Jun. 1992.

\bibitem{PW00}
W.~K. Ching, ``A note on the convegence of asynchronous greedy algorithm with
  relaxation in a multiclass queueing system,'' \emph{{IEEE} Commun. Lett.},
  vol.~3, pp. 34--36, Feb. 1999.

\bibitem{PW01}
E.~Altman, T.~Ba\c{s}ar, and R.~Srikant, ``Nash equilibria for combined flow
  control and routing in networks: asymptotic behaviour for a large number of
  users,'' \emph{{IEEE} Trans. Automtatic Control}, vol.~47, pp. 917--930, Jun.
  2002.

\bibitem{PW02}
V.~Vukadinovic and G.~Karlsson, ``Video streaming in {3.5G}: On
  throughput-delay performance of proportional fair scheduling,'' in
  \emph{Proc. Int. Symp. on Modeling, Analysis and Simulation of Comp. and
  Telecom. Systems}, California, USA, Sep. 2006.

\bibitem{BA00}
T.~Ba\c{s}ar and G.~J. Olsder, \emph{Dynamic Noncooperative Game Theory}.\hskip
  1em plus 0.5em minus 0.4em\relax Philadelphia, PA, USA: SIAM Series in
  Classics in Applied Mathematics, Jan. 1999.

\bibitem{WO00}
C.~U. Saraydar, N.~B. Mandayam, and D.~J. Goodman, ``Pricing and power control
  in a multicell wireless data network,'' \emph{{IEEE} J. Select. Areas
  Commun.}, vol.~19, no.~10, pp. 1883--1892, Oct. 2001.

\bibitem{BACD00}
D.~Niyato and E.~Hossain, ``{QoS}-aware bandwidth allocation and admission
  control in {IEEE} 802.16 broadband wireless access networks: a noncooperative
  game theoretic approach,'' \emph{Elsevier Computer Networks}, vol.~51,
  no.~11, pp. 3305--3321, Aug. 2007.

\bibitem{PP06}
C.~Bouroghain, D.~Agrawal, and S.~Suri, ``A game theoretic framework for
  incentives in p2p systems,'' in \emph{Proc. of the Third International
  Conference on Peer-to-Peer Computing}, Linkoping, Sweden, Sep. 2003.

\bibitem{CO00}
T.~Arnold and U.~Schwalbe, ``Dynamic coalition formation and the core,''
  \emph{Journal of Economic Behavior and Organization}, vol.~49, pp. 363--380,
  Nov. 2002.

\bibitem{TL00}
D.~Fudenberg and D.~K. Levine, \emph{The theory of learning in games}.\hskip
  1em plus 0.5em minus 0.4em\relax Cambridge, MA, USA: The MIT press, May 1998.

\bibitem{GRA00}
D.~West, \emph{Introduction to graph theory}.\hskip 1em plus 0.5em minus
  0.4em\relax Prentice Hall.

\bibitem{ALG00}
N.~Nisan, T.~Roughgarden, E.~Tardos, and V.~V. Vazirani, \emph{Algorithmic Game
  Theory}.\hskip 1em plus 0.5em minus 0.4em\relax Cambridge, UK: Cambridge
  University Press, Sep. 2007.

\end{thebibliography}

\end{document}